\documentclass{llncs}

\usepackage{amssymb}
\usepackage{amsmath}
\usepackage{color}

\usepackage{algorithm}
\usepackage{algorithmic}
\usepackage{graphicx}
\usepackage{subfigure}
\usepackage{enumitem}

\usepackage{comment}

\newcommand{\always}[2]{\mathbf{G}_{[#1,#2]}}

\begin{document}

\mainmatter

\title{Policy learning for time-bounded reachability in Continuous-Time Markov Decision Processes via doubly-stochastic gradient ascent}

\author{
Ezio Bartocci\inst{1}, Luca Bortolussi\inst{2,3,4}, Tom\v{a}\v{s} Br\'{a}zdil\inst{5},\\ Dimitrios Milios\inst{6}, Guido Sanguinetti\inst{6,7}
}

\institute{
Faculty of Informatics, Vienna University of Technology, Austria
\and
Dept. of Maths and Geosciences, University of Trieste, Italy
\and
CNR/ISTI, Pisa, Italy
\and
Modelling and Simulation Group, Saarland University, Germany
\and
Faculty of Informatics, Masaryk University, Czech Republic
\and
School of Informatics, University of Edinburgh, UK
\and
SynthSys, Centre for Synthetic and Systems Biology, University of Edinburgh, UK
}

\maketitle

\begin{abstract}
Continuous-time Markov decision processes are an important class of models in a wide range of applications, ranging from cyber-physical systems to synthetic biology. A central problem is how to devise a policy to control the system in order to maximise the probability of satisfying a set of temporal logic specifications. Here we present a novel approach based on statistical model checking and an unbiased estimation of a functional gradient in the space of possible policies. The statistical approach has several advantages over conventional approaches based on uniformisation, as it can also be applied  when the model is replaced by a black box, and does not suffer from state-space explosion. The  use of a stochastic gradient to guide our search considerably improves the efficiency of learning policies. We demonstrate the method on a proof-of-principle non-linear population model, showing strong performance in a non-trivial task.  

\end{abstract}

\section{Introduction}
\label{sec:intro}

Continuous-time Markov Decision Processes (CTMDPs)~\cite{Baier2005} are a very powerful 
mathematical framework to solve control and dependability problems in real-time 
systems featuring both probabilistic and nondeterministic behaviours. 
Examples include applications such as the control of epidemic processes~\cite{lefevre81,Xianping2006}, 
 power management~\cite{Qui2001}, queueing systems~\cite{Sennott1998}  
and cyber-physical systems~\cite{Ayala2012}. A CTMDP extends a continuous-time Markov chain (CTMC) by 
introducing a decision maker (also called \emph{scheduler}) that can perform 
actions with an associated cost or reward.
CTMDPs are particularly useful modelling tools to address important problems 
such as \emph{model checking}~\cite{Baier2003}  and \emph{planning}. 

\emph{Model checking} aims to verify if a CTMDP satisfies a desired  
requirement for a given class of schedulers or  for all possible schedulers.
 The requirement of interest is usually expressed in terms of the \emph{min}/\emph{max}
 probability for a CTMDP to satisfy the temporal logic property~\cite{Baier2003} of interest. 
 In particular, the main target of the current quantitative model checking techniques for CTMDPs
is the \emph{time-bounded reachability}~\cite{Baier2005,Neuhaeusser2010,Rabe2011,Rabe2013,Yuliya2015}, a property that requires a CTMDP
to reach a particular set of states within a time bound.

\emph{Planning} or \emph{scheduling} is an orthogonal problem w.r.t. model checking. It consists in devising  
the optimal sequence of actions (or \emph{policy}) to control the system in order to 
maximise the probability to satisfy a temporal logic specification such as the 
aforementioned time-bounded reachability. In the case of CTMDP the optimal 
scheduling can be either \emph{timed} or \emph{untimed} depending on whether or not 
the scheduler is aware of the passing of time. Timed optimal scheduling can be further 
classified in \emph{late} or \emph{early} depending on whether the decision of choosing an action 
 can change while the time passes in a state or it remains unchanged. 

In this paper we present a novel statistical approach to compute lower bounds on the 
 maximum reachability probability of a CTMDP. Our method uses a basis-function regression approach to compactly encode schedulers and effectively search for an 
 optimal one. We consider here \emph{randomised time-dependent early schedulers}, and  focus on population models, where the state  space of the CTMDP is represented by a set of integer-valued variables counting how 
 many entities of each kind are in the system. This is a large class of models: queueing 
 and performance models~\cite{Yuliya2015},  epidemic scenarios, biological systems 
 are all members of this class. Population models, despite being so common, suffer 
 severely from state space explosion, with the number of states growing exponentially with 
 the number of variables.  This reflects on the size of the schedulers: in principle, we would   need to store a function of time for each state of the CTMDP,  which is unfeasible. This paper contains two main novel insights. First, we leverage the structure of the state space, which can be embedded as 
 a discrete grid in real space, to obtain a continuous relaxation of the problem and  consider schedulers defined on such a continuous space. 
 The advantage now is that we can treat time and space uniformly, representing schedulers 
 as continuous functions. This opens up the use of machine learning methods 
 to represent continuous functions as combinations of basis functions, and allows us to define 
 the optimisation problem as a search in such a continuous function space. The second main contribution of the work is  to set up an efficient stochastic gradient ascent search algorithm, which considerably speeds up the search in the space of functions. This is based on a novel algorithm using Gaussian Processes (GPs) and statistical model checking to
  sample in an unbiased manner the gradient of the functional associating a 
 reachability probability with a randomized scheduler.  This method allows us to effectively learn schedulers that maximise (locally) the reachability 
 probability. 
 
 \paragraph{Organisation of the paper.}  In Section~\ref{related_work} we present the related 
 work and in Section~\ref{sec:preliminaries} we provide the 
 necessary formal background on CTMDPs.
  In Section~\ref{sec:gradientDescent} we present our algorithm to learn optimal policies using stochastic 
 functional gradient ascent techniques. In Section~\ref{sec:example} we demonstrate our algorithm 
 on an epidemiology case study. Finally, we draw our conclusion in Section~\ref{sec:conclusions}.
 
 \section{Related work}
 \label{related_work}
 
Symbolic model checking algorithms for discrete-time Markov 
decision processes have been intensively investigated in~\cite{baier98,alfaro1995} and implemented 
in popular tools such as PRISM~\cite{kwiatkowska_prism_2011}. 
In the area of CTMDPs, the problem of time optimal planning 
has been first considered from a theoretical point of view in~\cite{miller68}. In the last decade  there has been a great effort
on developing practical model checking techniques for CTMDPs~\cite{Baier2005,Neuhaeusser2010,Rabe2011,Rabe2013,Yuliya2015} 
(i.e., based on  uniformization~\cite{Baier2005}) with the introduction of efficient approximation algorithms 
that provide also formal error bounds. Generally, all these techniques rely on the a-priori 
 knowledge of the CTMDP model under investigation and they suffer the state-explosion problem.

In this light, methods based on statistical model checking are particularly attractive, even though 
they may suffer when the property to be verified is a rare-event.
In \cite{HenriquesMZPC12} the authors presented a statistical model checking algorithm for the discrete-time case; 
their approach was however based on random search combined with a greedy selection 
criterion, which is difficult to analyse in terms of convergence properties, and may 
be practically difficult to tune.  The availability of an unbiased estimate of the (functional) gradient allows us to improve on 
the efficiency, and to leverage a rich theory on the convergence of stochastic gradient ascent algorithms. 
 Our approach relies on using Gaussian Processes (GPs), a probability distribution over the space of functions 
which universally approximates continuous functions. This ability of GPs to provide efficient approximations to 
intractable functions has been recently exploited in 
a formal modelling context in a number of publications \cite{bortolussi:smoothed16,Bartocci2015,lucaQEST13}.

Our work is closely related to research in the area of machine learning, where much research has gone on defining good local search methods to learn effective randomised 
schedulers, for different criteria like time bounded reward, time unbounded 
discounted reward, receding horizon.  These approaches combine simulation 
with efficient exploration schemes, like gradient ascent~\cite{rosenstein_robot_2001,bartlett_experiments_2011}, 
path integral policy improvement~\cite{stulp_path_2012}, or the cross 
entropy method~\cite{mannor_cross_2003},  see~\cite{stulp_policy_2012}  for a survey. Our approach differs in two main directions: firstly, we are interested in complex rewards associated with trajectories of the system, i.e. reachability probabilities. Secondly, we work directly in continuous time, which prevents the use of simple finite-dimensional gradient ascent methods. In particular, the GP-based method of defining a stochastic gradient ascent algorithm is novel, to the best of our knowledge.


\section{Preliminaries}
\label{sec:preliminaries}


\begin{definition}
A continuous-time Markov decision process (CTMDP) is a tuple $\mathcal{M} = (S,\mathcal{A},R,s_0)$, where $S$ is a finite set of {\em states}, $\mathcal{A}$ is a finite set of {\em actions},  $R:S \times\mathcal{A}\times S\rightarrow \mathbb{R}_{\geq 0}$ is the {\em rate function}, and $s_0\in S$ is the initial state. 
\end{definition}
\noindent
An action $a\in \mathcal{A}$ is {\em enabled} in a state $s\in S$ if there is a state $s'\in S$ such that $R(s,a,s')>0$. We call $\mathcal{A}(s)$ the set of enabled actions in $s$. A {\em continuous-time Markov chain (CTMC)} is a CTMDP where every $\mathcal{A}(s)$ is a singleton.

We define $E(s,a) = \sum_{s'} R(s,a,s')$ the {\em exit} rate from a state $s$ when an~action $a$ is chosen. We also let $P(s,a,s') = R(s,a,s')/ E(s,a)$ be the probability of jumping from $s$ to $s'$ if $a$ is selected. 

Intuitively, a run of CTMDP starts in a state $s_0$ and proceeds as follows: Assume that the CTMDP is currently in a state $s_i$. First, an action $a_i$ is selected, then the CTMDP waits for a delay $t_i$ randomly chosen according to an exponential distribution with the exit rate $E(s_i,a_i)$, and then a next state $s_{i+1}$ is chosen randomly with the probability $P(s_i,a_i,s_{i+1})$. This produces a run $s_0 a_0 t_0 s_1 a_1 t_1 \cdots$.

In order to obtain a complete semantics, we need to specify how the actions are selected in every step.
Obviously, in CTMC, only a single action is enabled in each state. In CTMDP, actions need to be chosen by a scheduler defined as~follows.
\begin{definition}\label{def:scheduler}
An {\em (early timed) scheduler} is a function $\sigma:\mathbb{R}_{\geq 0}\times S\times \mathcal{A}\rightarrow [0,1]$ which to every $t\in \mathbb{R}_{\geq 0}$, $s\in S$ and $a\in \mathcal{A}$ assigns a probability measure $\sigma(t,s,a)$ that the~action $a$ is chosen in $s$ at time $t$. 
\end{definition}
A scheduler $\sigma$ is {\em deterministic} if for every $t\in \mathbb{R}_{\geq 0}$, $s\in S$ and $a\in \mathcal{A}$ we have that $\sigma(t,s,a)\in \{0,1\}$. We denote by $\Sigma$ and $\Sigma_{D}$ the sets of all schedulers and all deterministic schedulers, respectively.

\begin{remark}
An early scheduler has the following property:  whenever an execution of the CTMDP enters into a state $s$ at time $t$, the scheduler chooses an action and commits to it. It cannot be changed while the system remains in state $s$, in contrast with late schedulers, that can change action while in a state. 
\end{remark}
Once a scheduler $\sigma$ and an initial state $s$ is fixed, we obtain the unique probability measure $\mathbb{P}_{\sigma}^{\mathcal{M},s}$ over the space of all runs initiated in $s$ using standard definitions~\cite{Neuhausser2010}.

\paragraph{Time-Bounded Reachability.} Let $G\subset S$ be a set of goal states and let $I=[t_1,t_2]\subseteq [0,\infty)$ be a closed interval.
Denote by $\mathbb{P}^{\mathcal{M},s}_{\sigma}(\diamond_{I} G)$ the probability that $G$ is reached from $s$ within the time interval $I$ using the scheduler $\sigma$. Our goal is to maximize $\mathbb{P}^{\mathcal{M},s}_{\sigma}(\diamond_{I} G)$, i.e. compute a scheduler $\sigma^*$ satisfying
\[
\mathbb{P}^{\mathcal{M},s}_{\sigma^*}(\diamond_{I} G)\quad = \quad \sup_{\sigma\in\Sigma} \mathbb{P}^{\mathcal{M},s}_{\sigma}(\diamond_{I} G)
\]
We say that such a scheduler $\sigma^*$ is {\em optimal}.
\begin{proposition}[\cite{Neuhausser2010}]\label{prop:Neuhausser}
There always exists an optimal scheduler.
\end{proposition}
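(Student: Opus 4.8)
The plan is to prove the proposition \emph{constructively}: I would derive a Bellman-type fixed-point equation for the optimal time-bounded reachability probability, show that this equation has a unique solution $V$, and then read off from $V$ a deterministic early timed scheduler attaining the supremum. A by-product of the argument is that randomisation never helps, so the optimum over $\Sigma$ equals the optimum over $\Sigma_D$.

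First, for $t\in[0,t_2]$ and $s\in S$ let $V(t,s)$ be the supremum over all schedulers of the probability of reaching $G$ inside $I$ when the run occupies state $s$ at absolute time $t$; the quantity we care about is $V(0,s_0)$. Conditioning on the first chosen action $a\in\mathcal A(s)$, on the $\mathrm{Exp}(E(s,a))$-distributed sojourn time $u$, and on the successor state drawn from $P(s,a,\cdot)$, a one-step decomposition shows that $V$ solves $V=\mathcal T V$ where
\[
(\mathcal T W)(t,s)\;=\;r_I(t,s)\;+\;\max_{a\in\mathcal A(s)}\int_{0}^{t_2-t} E(s,a)\,e^{-E(s,a)u}\sum_{s'}P(s,a,s')\,W(t+u,s')\,\mathrm{d}u,
\]
and $r_I(t,s)$ collects the probability of \emph{already} succeeding without a further jump (this is where the two-sided nature of $I=[t_1,t_2]$ enters: $r_I$ is $1$ when $t\ge t_1$ and $s\in G$, it involves an $e^{-E(s,a)(t_1-t)}$ factor when $s\in G$ but $t<t_1$, and $0$ otherwise). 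The key structural facts are that $\mathcal A(s)$ is finite, so for a fixed $W$ the maximum is attained, and that the one-step return is affine in the (randomised) action choice, so maximising over distributions on $\mathcal A(s)$ gives the same value as maximising over $\mathcal A(s)$ — hence deterministic choices lose nothing.

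Next I would establish existence and enough regularity of $V$. Working in the Banach space of bounded measurable functions $[0,t_2]\times S\to\mathbb R$ with the sup norm, and using that all exit rates are bounded (the state space is finite) so that the number of jumps before $t_2$ has a Poisson-type tail, one shows that a suitable iterate of $\mathcal T$ is a contraction — equivalently, $\mathcal T$ is monotone and its iterates converge — which yields a unique fixed point $V$. Moreover the integral operator smooths its argument, so an induction on the number of jumps shows $V(\cdot,s)$ is piecewise continuous in $t$, with the only breakpoint at $t_1$. Then define $\sigma^*(t,s)$ to place all its mass on an action attaining the maximum in $(\mathcal T V)(t,s)$, breaking ties by a fixed linear order on $\mathcal A$; because the maximised expressions are continuous in $t$ away from $t_1$, the map $t\mapsto\sigma^*(t,s)$ is measurable, so $\sigma^*\in\Sigma_D$ is a legitimate scheduler. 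Finally one checks $\mathbb P^{\mathcal M,s}_{\sigma^*}(\diamond_I G)=V(t,s)$: the inequality ``$\le$'' is immediate since $V$ is a supremum, and for ``$\ge$'' one observes that $t\mapsto\mathbb P^{\mathcal M,\cdot}_{\sigma^*}(\cdot)$ satisfies the same one-step decomposition — now with the maximising action plugged in — hence is itself a fixed point of $\mathcal T$, so it equals $V$ by uniqueness.

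The routine parts are deriving $\mathcal T$ and the contraction estimate; the delicate points, which I expect to be the main obstacle, are (i) the bookkeeping for the two-sided interval $I$, which forces $V$ to depend on absolute time (not merely on remaining time) and to change character at $t_1$, and (ii) turning the greedy pointwise choice into a measurable scheduler and then proving that this pointwise-optimal scheduler is \emph{globally} optimal, i.e. that one can never do better by sacrificing one-step optimality for a better continuation — both of which are handled by the fixed-point uniqueness of the previous step but are where the genuine content of the proof lies.
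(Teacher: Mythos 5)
Your proof is sound in outline, but it follows a genuinely different route from the one sketched in the paper's appendix. You work directly in continuous time: you characterise the optimal value $V$ as the unique fixed point of a Bellman operator $\mathcal{T}$ on bounded measurable functions over $[0,t_2]\times S$ (and indeed, since the state and action sets are finite, exit rates are uniformly bounded and the horizon is finite, $\mathcal{T}$ itself is already a sup-norm contraction with modulus at most $1-e^{-\Lambda t_2}$, $\Lambda=\max_{s,a}E(s,a)$ --- you do not even need to pass to an iterate), and you then extract a greedy deterministic time-positional scheduler and prove its optimality by a verification argument resting on fixed-point uniqueness. The paper instead avoids any fixed-point machinery: it augments the state with a step counter, truncates after $n{+}1$ jumps, computes the truncated values $V^n$ by backward induction on the counter, lets $V^n\to V$, uses a tail bound on the number of jumps within $t_2$ to get $\varepsilon$-optimality of the truncated greedy scheduler, and finally a stabilisation-of-argmax argument to conclude that the scheduler choosing $\mathit{argmax}_a\sum_{s'}\int_0^\infty R(s,a,s')e^{-R(s,a,s')t'}V(s',t+t')\,dt'$ is $\varepsilon$-optimal for every $\varepsilon$, hence optimal. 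What each buys: the paper's truncation argument is elementary (finite backward induction, no function-space analysis or regularity of $V$ needed) and simultaneously shows that time-positional deterministic schedulers lose nothing even against history-dependent ones, which is the actual content of the cited result; your route gives uniqueness of the value function and a verification theorem in one stroke, avoids the double limit in $n$ and $\varepsilon$, and makes the ``randomisation never helps'' point transparent via affinity of the one-step return in the action distribution.

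Two small repairs your sketch needs (both routine): (i) as written, $(\mathcal{T}W)(t,s)=r_I(t,s)+\max_a\int(\cdots)$ double-counts when $s\in G$ and $t\in[t_1,t_2]$ --- such configurations must be treated as terminal (value $1$, no continuation term), and for $s\in G$, $t<t_1$ the term $e^{-E(s,a)(t_1-t)}$ depends on $a$ and so belongs inside the max together with the post-jump continuation; (ii) in the final step, $\mathbb{P}^{\mathcal{M},\cdot}_{\sigma^*}$ is a priori a fixed point of the operator $\mathcal{T}_{\sigma^*}$ in which the greedy action is substituted for the max, not of $\mathcal{T}$ itself; conclude instead that $V$ is also a fixed point of $\mathcal{T}_{\sigma^*}$ (by the greedy choice) and that $\mathcal{T}_{\sigma^*}$ is a contraction, so the two coincide. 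Relatedly, asserting the dynamic programming equation for the supremum over time-positional schedulers directly involves a pasting subtlety (one scheduler must be uniformly near-optimal at all $(t',s')$); it is cleaner, with exactly the machinery you already set up, to define $V^*$ as the fixed point, prove $V_\sigma\le V^*$ for every $\sigma\in\Sigma$ by monotone comparison of $\mathcal{T}_\sigma$ with $\mathcal{T}$, and prove $V_{\sigma^*}=V^*$ by the verification step --- existence of an optimal (indeed deterministic) scheduler then follows without ever proving the DPP for the supremum itself.
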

When dealing with  time-bounded reachability, we may safely assume that schedulers are defined only on the interval $[0,T]$, i.e., on a compact set.
An equivalent problem is to maximise a time-bounded safety property $\square_I G$, requiring the CTMDP to remain in a region $G$ during the time-interval $I$. In this case, we have that $\mathbb{P}^{\mathcal{M},s}_{\sigma^*}(\square_{I} G) = \mathbb{P}^{\mathcal{M},s}_{\sigma^*}(\neg\diamond_{I} S\setminus G) = \inf_{\sigma\in\Sigma} \mathbb{P}^{\mathcal{M},s}_{\sigma}(\diamond_{I} S\setminus  G)$.

\paragraph{Population CTMDPs.} In this work, we will consider CTMDPs modelled in a special way, reminiscent of population processes which are very common in performance modelling, epidemiology, systems biology. The basic idea is that we will have  populations of  agents, belonging to one or more classes,  that can interact together and thus evolve in time. Individual agents are typically indistinguishable, hence the state of the system can be described by a set of  variables counting the amount of agents of each kind in the system. A non-deterministic action in this context typically represents an action of a global controller, enforcing a policy controlling the system, or effects on the environment. 

More formally, we will describe a Population CTMDP (PCTMDP), extending population processes \cite{tutorial,wolf11}, as a tuple $(\vec{X},\mathcal{T},\mathcal{A},\vec{s}_0)$, where:
\begin{itemize}
\item $\vec{X} = X_1,\ldots,X_n$ is a vector of population variables, $X_i\in\mathbb{N}$, which we assume take values on $S = \mathbb{N}^n\cap E$, where $E$ is a compact subset of $\mathbb{R}^n$ (hence $S$ is finite);
\item $\vec{s_0}\in S$ is the initial state;
\item $\tau\in \mathcal{T}$ is the set of transitions, of the form $(a,\vec{v},f(\vec{X}))$, where $a$ is an action from the set $\mathcal{A}$, $\vec{v}$ is an update vector, specifying that the state after the execution of a transition in state $\vec{s}$ is $\vec{s}+\vec{v}$, and $f(\vec{X})$ is the state-dependent rate function.
\end{itemize}
The idea of this model is that in each state an action $a$ is chosen, and then the model evolves by a race condition between  transitions guarded by the action $a$. 
If a transition is  enabled by all possible actions, we can either specify a copy of it guarded by each model action $a$, or use the notation  $(*,\vec{v},f(\vec{X}))$. 
The CTMDP $\mathcal{M} = (S,\mathcal{A},R)$ associated with a PCTMDP $(\vec{X},\mathcal{T},\mathcal{A},\vec{x}_0)$ is defined by specifying the state space  $S = \mathbb{N}^n\cap E$ and the rate function $R$ as 
\[ R(\vec{s},a,\vec{s'}) =  \sum\{ f_\tau(\vec{s})~|~\tau=(a,\vec{v},f(\vec{s}))\wedge \vec{s'} = \vec{s}+\vec{v}\}.\]
It is easy to observe, modulo the introduction of enough variables and actions, that the expressive power of PCTMDPs is the same as that of CTMDPs introduced earlier.

\section{Learning optimal policies via stochastic functional gradient ascent}
\label{sec:gradientDescent}
In this section we give a variational formulation of the control problem of determining the optimal scheduler for a CTMDP. 
We show how to approximate statistically in an unbiased way the functional gradient of the time-bounded reachability probability, and give a convergent algorithm to achieve this. 


\subsection{Reachability probability as a functional}

As defined in Section \ref{sec:preliminaries}, a scheduler is a way of resolving non-determinism by associating a (time-dependent) probability to each action/ state pair. 
We will realise a scheduler as a vector $\mathbf{f}$ of  functions $f_{\alpha}: E \times [0,T] \rightarrow \mathbb{R}$, one for each action $\alpha \in \mathcal{A}$, where $E$ is the compact subset of $\mathbb{R}^n$ used to define $S$ for the PCTMDP formalism. The corresponding probability of an action $\alpha$ at a state $\vec{X}$ can be retrieved using the soft-max (logistic) transform as follows:
\begin{equation}
p_{\vec{X}}(\alpha \mid t) \equiv \sigma(t,\alpha,\vec{X}) = \frac{\exp(f_{\alpha}(\vec{X}, t))}{\sum_{\alpha' \in \mathcal{A}} \exp(f_{\alpha'}(\vec{X}, t))}, \qquad \vec{X} \in S, t\in [0,T]
\end{equation}
Given a scheduler $\sigma$, a CTMDP is reduced  to a CTMC $\mathcal{M}_{\sigma}$, and the problem of estimating the probability of a reachability property $\phi = \diamond_I G$ can be reduced to the computation of a transient probability for $\mathcal{M}_{\sigma}$ by standard techniques \cite{Baier2003}.
%
%
The satisfaction probability can be therefore viewed as a {\it functional} 
\[
Q\colon\mathcal{F}\rightarrow\mathbb{R}
\]
where $\mathcal{F}$ is the set of all possible scheduler functions.
The functional is defined explicitly as follows: consider a sample trajectory $\{s, a, t\}_n \equiv s_0\xrightarrow{\alpha_0,t_0}s_1\xrightarrow{\alpha_1,t_1}\ldots s_n\xrightarrow{\alpha_n,t_n}s_{n+1}$ from the CTMC $\mathcal{M}_{\sigma}$ obtained from the CTMDP by selecting a scheduler. Let $\phi= \diamond_I G$, $I=[t_1,t_2]$ be a reachability property, and denote by $ \{s, a, t\}_n \models \phi$ the fact that the trajectory reaches $G$ within the specified time bound. We can encode it in the following indicator function:
\begin{equation}
I_{\phi}(\{s, a, t\}_n) = 
\begin{cases}
1, \quad \{s, a, t\}_n \models \phi \\
0, \quad \mathrm{otherwise}.
\end{cases}
\end{equation}
Then the expected reachability value associated with the scheduler $\sigma$, represented by the vector of functions $\mathbf{f}=\{ f_\alpha\}_{\alpha\in\mathcal{A}}$, is defined as follows:
\begin{equation}
Q\left[\mathbf{f}(\vec{X}, t)\right]=E_{\mathcal{M}_{\sigma}}\left[I_{\phi}(\{s, a, t\}_n)\right],
\label{reward}
\end{equation}
where expectation is taken with respect to the distribution on trajectories of $\mathcal{M}_{\sigma}$.
Notice that in general it is computationally very hard to analytically compute the r.h.s.\ in the above equation, as it amounts to transient analysis for a time-inhomogeneous CTMC; we therefore need to resort to statistical model checking methods \cite{SB:Zuliani:2009:StatMC,MC:YounesSimmons:INFCOMP2006:statMC} to approximate in a Monte Carlo way the expectation in equation \eqref{reward}.

To formulate the continuous time control problem of determining the optimal scheduler, we need to define the concept of functional derivative.

\begin{definition}
Let $Q\colon\mathcal{F}\rightarrow\mathbb{R}$ be a functional defined on a space of functions $\mathcal{F}$. 
The {\it functional derivative} of $Q$ at $f\in\mathcal{F}$ along a function $g\in\mathcal{F}$, denoted by $\frac{\delta Q}{\delta f}$, is defined by 
\begin{equation}
\int \frac{\delta Q}{\delta f}(\vec{X}, t) \, g(\vec{X}, t) \, ds dt = \lim_{\epsilon \rightarrow 0} \frac{Q[f(\vec{X}, t) + \epsilon g(\vec{X}, t)] - Q[f(\vec{X}, t)]}{\epsilon}
\label{eq:directionalDerivativeApprox}
\end{equation}
whenever the limit on the r.h.s. exists.  
\end{definition}
Notice that if we restrict ourselves to piecewise constant functions on a grid, the definition above returns the standard definition of gradient of a finite-dimensional function.
We can now give a variational definition of optimal scheduler
\begin{lemma}
An optimal scheduler $\sigma$ is associated with a function $f$ such that 
\begin{equation}
max_{g\in\mathcal{F}} \left\| \int \frac{\delta Q}{\delta f}(\vec{X}, t) \, g(\vec{X}, t) \, ds dt \right\|_2=0\label{varFormulation}
\end{equation}
where $\Vert\cdot\Vert_2$ denotes the $L^2$ norm on functions.
\end{lemma}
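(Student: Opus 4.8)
The plan is to read \eqref{varFormulation} as the classical first-order stationarity condition from the calculus of variations, made available here by the fact that the soft-max (logistic) parametrisation turns the search for an optimal scheduler into an \emph{unconstrained} optimisation over the linear space $\mathcal{F}$: every vector $\mathbf{f}=\{f_\alpha\}_{\alpha\in\mathcal{A}}$ of real-valued functions on $E\times[0,T]$ yields an admissible randomised scheduler, and conversely perturbations $f+\epsilon g$ stay in $\mathcal{F}$ for all $g\in\mathcal{F}$ and all $\epsilon$. There are no feasibility constraints and hence no boundary, so an interior optimum must be a stationary point. One caveat I would flag explicitly: the supremum in Proposition~\ref{prop:Neuhausser} may be attained only by a deterministic scheduler, which corresponds to $f_\alpha$ with infinite values and is therefore not a point of $\mathcal{F}$; accordingly the statement is to be understood for an optimal scheduler that is \emph{realised} by some $f\in\mathcal{F}$ (equivalently, the lemma characterises the stationary points among which a maximiser over $\mathcal{F}$, if attained, must lie), deterministic optima being recovered only as limits.

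The core argument is then short. Fix an $f\in\mathcal{F}$ realising an optimal scheduler, take an arbitrary direction $g\in\mathcal{F}$, and consider the scalar map $h_g(\epsilon):=Q[f+\epsilon g]$. By optimality of $f$ we have $h_g(\epsilon)\le h_g(0)$ for every $\epsilon\in\mathbb{R}$, so $h_g$ has a global maximum at the interior point $\epsilon=0$. If $Q$ is Gâteaux-differentiable at $f$ then, by the very definition of the functional derivative \eqref{eq:directionalDerivativeApprox}, $h_g$ is differentiable at $0$ with $h_g'(0)=\int \frac{\delta Q}{\delta f}(\vec{X},t)\,g(\vec{X},t)\,ds\,dt$; a differentiable function with an interior maximum has vanishing derivative there, so this integral is $0$. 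Since $g$ was arbitrary, the directional derivative vanishes in every direction, which is exactly $\max_{g\in\mathcal{F}}\big\|\int \frac{\delta Q}{\delta f}\,g\,ds\,dt\big\|_2=0$ (over normalised $g$ this maximum equals $\|\delta Q/\delta f\|_2$ by Cauchy--Schwarz, so the condition is equivalent to the functional gradient being identically zero; over unrestricted $g$ the maximum is either $0$ or $+\infty$, and optimality rules out the latter).

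The genuine content, and the step I expect to be the main obstacle, is justifying that $Q$ is Gâteaux-differentiable at $f$, i.e.\ that the limit in \eqref{eq:directionalDerivativeApprox} exists and admits the claimed integral representation. Here I would use that, for a fixed $\mathbf{f}$, $Q[\mathbf{f}]$ is a time-bounded reachability (transient) probability of the time-inhomogeneous CTMC $\mathcal{M}_\sigma$, hence an entry, evaluated at time $T$, of the solution of the Kolmogorov forward equations — a linear ODE system whose generator depends on $\mathbf{f}$ only through the soft-max probabilities $\sigma(t,\alpha,\vec{X})$. The soft-max is real-analytic in $\mathbf{f}$, and the solution of a linear ODE depends smoothly (indeed analytically) on parameters that enter the generator continuously in $t$; standard ODE sensitivity analysis together with a dominated-convergence argument then lets one differentiate under the expectation in \eqref{reward} and exchange limit and integral, yielding the representation of $\delta Q/\delta f$. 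Making this fully rigorous on the infinite-dimensional $\mathcal{F}$ is the delicate part; the clean route is to observe that any perturbation $f+\epsilon g$ influences a given trajectory only through finitely many generator evaluations, or — as the paper does later — to restrict $\mathcal{F}$ from the outset to a finite-dimensional basis-function family, which reduces the whole argument to ordinary multivariate calculus and makes the stationarity condition literally the vanishing of a finite-dimensional gradient.
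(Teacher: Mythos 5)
Your argument is correct in substance, but note that the paper itself offers no proof of this lemma: it is stated as an (implicitly standard) variational characterisation and immediately used to motivate the gradient-ascent scheme, so there is nothing to compare against beyond the intended reading. Your proof supplies exactly that intended content — the first-order stationarity condition: for $f$ realising an optimum, $\epsilon\mapsto Q[f+\epsilon g]$ has an interior maximum at $0$, hence every directional derivative $\int \frac{\delta Q}{\delta f}\,g\,ds\,dt$ vanishes, which (restricting to normalised $g$, or reading the $\max$ as excluding the degenerate $+\infty$ case) is equivalent to the displayed condition. The two caveats you flag are genuine and worth making explicit, since the paper glosses over both: (i) by Proposition~\ref{prop:Neuhausser} the optimum is attained by a deterministic (TTPD) scheduler, which the soft-max parametrisation only reaches in the limit of infinite $f_\alpha$, so the lemma must be read as characterising optima that are realised by some finite $f\in\mathcal{F}$ (or as a necessary condition for stationary points of $Q$ on $\mathcal{F}$); and (ii) the existence of the Gâteaux derivative in \eqref{eq:directionalDerivativeApprox} is simply assumed by the paper ("whenever the limit exists"), whereas your sketch via smooth dependence of the time-inhomogeneous Kolmogorov equations on the soft-max probabilities, or the reduction to the finite-dimensional basis-function representation used later in the paper, is the right way to close that gap. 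One further point you could add: the condition is only necessary, not sufficient — the paper itself concedes non-convexity, so stationary points need not be (even locally unique) maximisers, which is consistent with your reading of the lemma as locating candidates rather than certifying optimality.
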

The variational formulation above allows us to attack the problem via direct optimisation through a gradient ascent algorithm, as we will see below.

\subsection{Stochastic Estimation of the Functional Gradient}

It is well-known that a gradient ascent approach is guaranteed to find the global optimum of a convex objective function.
Gradient ascent starts from an initial solution which is updated iteratively towards the direction that induces the steepest change in the objective function; that direction is given by the gradient of the function.
For a functional $Q[f]$ the concept of gradient is captured by the functional derivative $\frac{\delta Q}{\delta f}$, which is a function of $\vec{X}, t$ that dictates the rate of change of the functional $Q$ when $f$ is perturbed at the point $(\vec{X}, t)$.
In the case of functional optimisation, the gradient ascent update will have the form:
\begin{equation}
f' = f + \gamma \frac{\delta Q}{\delta f}
\end{equation}
where $\gamma$ is the learning rate which controls the effect of each update, and $\frac{\delta Q}{\delta f}$ is the functional derivative of $Q$. Unfortunately, an analytic expression for the functional derivative of the functional defined in \eqref{reward} is usually not available.

We can however obtain an unbiased estimate of the functional derivative by using the infinite-dimensional generalisation of this simple lemma
\begin{lemma}\label{finDimGrad}
Let $q\colon\mathbb{R}^n\rightarrow\mathbb{R}$ be a smooth function, and let $\nabla q(\mathbf{v})$ be its gradient at a point $\mathbf{v}$. 
Let $\mathbf{w}$ be a random vector from an isotropic, zero mean distribution $p(\mathbf{w})$. 
For $\epsilon\ll 1$, define  \begin{equation}
\hat{\mathbf{w}}= 
\begin{cases}
\mathbf{w}, \quad  \mathrm{if\ } q(\mathbf{v}+\epsilon\mathbf{w}) - q(\mathbf{v}) > 0 \\
-\mathbf{w}, \quad \mathrm{otherwise}.
\end{cases}\label{flipper}
\end{equation}
Then\[
E_p\left[\epsilon\hat{\mathbf{w}}\right]\propto\nabla q(\mathbf{v})+O(\epsilon^2).\]
\end{lemma}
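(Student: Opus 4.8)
The plan is to Taylor-expand the scalar increment $q(\mathbf{v}+\epsilon\mathbf{w})-q(\mathbf{v})$ to first order in $\epsilon$, read off which sign the flip in \eqref{flipper} produces, and then compute $E_p[\epsilon\hat{\mathbf{w}}]$ by conditioning on the half-space determined by the sign of $\langle \mathbf{w},\nabla q(\mathbf{v})\rangle$. First I would write, by smoothness of $q$,
\[
q(\mathbf{v}+\epsilon\mathbf{w}) - q(\mathbf{v}) = \epsilon\, \langle \mathbf{w}, \nabla q(\mathbf{v})\rangle + O(\epsilon^2),
\]
so that for $\epsilon$ small enough (and away from the measure-zero set where $\langle \mathbf{w},\nabla q(\mathbf{v})\rangle = 0$) the sign of the increment agrees with the sign of $\langle \mathbf{w}, \nabla q(\mathbf{v})\rangle$. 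Hence $\hat{\mathbf{w}} = \operatorname{sign}(\langle \mathbf{w},\nabla q(\mathbf{v})\rangle)\,\mathbf{w} = \mathbf{w}$ if $\mathbf{w}$ lies in the half-space $H^+ = \{\mathbf{w} : \langle \mathbf{w},\nabla q(\mathbf{v})\rangle > 0\}$ and $\hat{\mathbf{w}} = -\mathbf{w}$ otherwise.

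Next I would compute the expectation. Using the symmetry of $p$ under $\mathbf{w}\mapsto -\mathbf{w}$ (which holds because $p$ is isotropic and zero-mean — more precisely I would invoke reflection symmetry through the origin), the two half-spaces contribute equally, giving
\[
E_p[\hat{\mathbf{w}}] = 2\, E_p\!\left[\mathbf{w}\, \mathbf{1}\{\langle \mathbf{w}, \nabla q(\mathbf{v})\rangle > 0\}\right].
\]
By isotropy, this vector must be proportional to $\nabla q(\mathbf{v})$: rotating coordinates so that $\nabla q(\mathbf{v})$ points along the first axis $\mathbf{e}_1$, the indicator depends only on the first coordinate $w_1$, so the expectation of each $w_j$ for $j\geq 2$ vanishes by the remaining reflection symmetry $w_j\mapsto -w_j$, while the first component equals $2\,E_p[w_1\,\mathbf{1}\{w_1>0\}] = E_p[|w_1|] > 0$, a strictly positive constant $c$ independent of the direction of $\nabla q(\mathbf{v})$. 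Therefore $E_p[\hat{\mathbf{w}}] = c\,\nabla q(\mathbf{v})/\|\nabla q(\mathbf{v})\|$, and multiplying by $\epsilon$ and folding the $O(\epsilon^2)$ error from the Taylor remainder (which perturbs only the measure-zero boundary region, contributing at most $O(\epsilon^2)$ to the expectation) gives $E_p[\epsilon\hat{\mathbf{w}}] \propto \nabla q(\mathbf{v}) + O(\epsilon^2)$, as claimed.

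The main obstacle — really the only subtle point — is making the "$O(\epsilon^2)$" bookkeeping honest: the sign of the increment can disagree with the sign of $\langle \mathbf{w},\nabla q(\mathbf{v})\rangle$ only when $|\langle \mathbf{w},\nabla q(\mathbf{v})\rangle|$ is itself $O(\epsilon)$ (a thin slab around the separating hyperplane) or when $\mathbf{w}$ is atypically large, so one must argue that the $p$-measure of this bad set, weighted by $\|\mathbf{w}\|$, is $O(\epsilon)$, hence contributes $O(\epsilon^2)$ after multiplication by $\epsilon$; this uses a mild moment/continuity assumption on $p$ (finite second moment and a bounded density near the hyperplane suffice). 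I would also note the implicit nondegeneracy assumption $\nabla q(\mathbf{v})\neq 0$ — at a critical point both sides are zero and the statement is vacuous. Everything else is the elementary half-space integral above.
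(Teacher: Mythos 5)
Your proof is correct and follows essentially the same route as the paper's: decompose $\mathbf{w}$ into its component along $\nabla q(\mathbf{v})$ and the orthogonal (level-surface) directions, observe that the flip forces the gradient component to be positive while isotropy makes the orthogonal components vanish in expectation. Your version is in fact more careful than the paper's brief sketch --- the explicit half-space computation giving $E_p[|w_1|]$, the $O(\epsilon^2)$ bookkeeping for the thin slab where the sign can disagree, and the caveat $\nabla q(\mathbf{v})\neq 0$ are all points the paper leaves implicit.
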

\begin{proof}
The tangent space of $\mathbb{R}^n$ at the point $\mathbf{v}$ is naturally decomposed in the orthogonal direct sum of a subspace of dimension 1 parallel to the gradient, and a subspace of dimension $n-1$ tangent to the level surfaces of the function $q$. 
For small $\epsilon$, any change in the value of the function $q$ will be due to movement in the gradient direction. 
As the distribution $p$ is isotropic, every direction is equally likely in $\mathbf{w}$; however, the flipping operation in the definition of $\hat{\mathbf{w}}$ in \eqref{flipper} ensures that the component of $\hat{\mathbf{w}}$ along the gradient $\nabla q(\mathbf{v})$ is always positive, while it does not affect the orthogonal components. 
Therefore, in expectation, $\hat{\mathbf{w}}$ returns the direction of the functional gradient.
\end{proof}

\subsection{Scheduler representation in terms of basis functions}

In order to obtain an unbiased estimate of a functional gradient, we need to define a zero-mean isotropic distribution on a suitable space of functions. To do so, we introduce the concept of Gaussian Process, a generalisation of the multivariate Gaussian distribution to infinite dimensional spaces of functions (see, e.g. \cite{Rasmussen2006}).

\begin{definition} A Gaussian Process (GP) over an input space $\mathcal{X}$ is an infinite-dimensional family of real-valued random variables indexed by $x\in\mathcal{X}$ such that, for every finite subset $X\subset\mathcal{X}$, the finite dimensional marginal obtained by restricting the GP to $X$ follows a multi-variate normal distribution.
\end{definition}
Thus, a GP can be thought as a distribution over functions $f\colon\mathcal{X}\rightarrow\mathbb{R}$ such that, whenever the function is evaluated at a finite number of points, the resulting random vector is normally distributed. In the following, we will only consider $\mathcal{X}=\mathbb{R}^d$ for some integer $d$.

Just as the Gaussian distribution is characterised by two parameters, a GP is characterised by two functions, the {\it mean} and {\it covariance} function. The mean function plays a relatively minor role, as one can always add a deterministic mean function, without loss of generality; in our case, since we are interested in obtaining small perturbations, we will set it to zero. The covariance function, which captures the correlations between function values at different inputs, instead plays a vital role, as it defines the type of functions which can be sampled from a GP. We will use the {\it Radial Basis Function} (RBF) covariance, defined as follows:
\begin{equation}
\mathrm{cov}(f(x_1),f(x_2))=k(x_1,x_2)=\alpha^2\exp\left[-\frac{\Vert x_1-x_2\Vert^2}{\lambda^2}\right].\label{rbfCov}
\end{equation}
where $\alpha$ and $\lambda$ are the amplitude and length-scale parameters of the covariance function.
To gain insight into the geometry of the space of functions associated with a GP with RBF covariance, we report without proof the following lemma (see e.g. Rasmussen \& Williams, Ch 4.2.1 \cite{Rasmussen2006}).
\begin{lemma}\label{denseLemma}
Let $\mathcal{F}_N$ be the space of random functions $f=\sum_{j=1}^N w_j\phi_j(x)$  generated by taking linear combinations of basis functions  $\phi_j(x)=\exp\left[-\frac{\Vert x-\mu_j\Vert^2}{\lambda^2}\right]$, with $\mu_j\in\mathbb{R}$ and independent Gaussian coefficients $w_j\sim\mathcal{N}(0,\alpha^2/N)$.
The sample space of a GP with RBF covariance defined by \eqref{rbfCov} is the infinite union of the the spaces $\mathcal{F}_N$.
\end{lemma}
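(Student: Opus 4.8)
The plan is to establish the two halves of the identity separately. First I would show that, for any fixed $N$ and any placement of the centres $\mu_1,\dots,\mu_N$, the random function $f=\sum_{j=1}^N w_j\phi_j$ is a (degenerate, rank-$N$) Gaussian process, and that as $N\to\infty$ with the centres filling $\mathbb{R}^n$ its covariance converges to the RBF covariance \eqref{rbfCov}; this yields the inclusion of $\bigcup_N\mathcal{F}_N$ in the sample space of the RBF-GP. Conversely, I would show that every sample path of the RBF-GP is a limit, uniform on compact sets, of functions in $\mathcal{F}_N$ with $N$ large. Put together, these identify the sample space of the RBF-GP with the closure of the increasing union $\bigcup_N\mathcal{F}_N$, which is the appropriate reading of the lemma's phrase \emph{infinite union} (recall that, exactly as for Brownian motion, GP sample paths generically lie outside the reproducing-kernel Hilbert space of the kernel, so some closure is unavoidable).

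\paragraph{Direction (i): the covariance of the finite model.}
Because the $w_j$ are independent, zero-mean, with variance $\alpha^2/N$, and the $\phi_j$ are deterministic, $f$ has mean zero and
\begin{equation}
\begin{split}
\mathrm{cov}\bigl(f(x_1),f(x_2)\bigr) &=\frac{\alpha^2}{N}\sum_{j=1}^N\phi_j(x_1)\phi_j(x_2)\\
&=\frac{\alpha^2}{N}\sum_{j=1}^N\exp\!\left[-\frac{\Vert x_1-\mu_j\Vert^2+\Vert x_2-\mu_j\Vert^2}{\lambda^2}\right].
\end{split}
\end{equation}
Letting the centres fill a large region with vanishing mesh (the factor $1/N$ being exactly the normalisation that turns the sum into a spatial average), this converges pointwise in $(x_1,x_2)$ to a constant multiple of $\int\exp\!\left[-(\Vert x_1-\mu\Vert^2+\Vert x_2-\mu\Vert^2)/\lambda^2\right]d\mu$. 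Using $\Vert x_1-\mu\Vert^2+\Vert x_2-\mu\Vert^2=2\bigl\Vert\mu-\tfrac{x_1+x_2}{2}\bigr\Vert^2+\tfrac12\Vert x_1-x_2\Vert^2$, the $\mu$-integral is an elementary Gaussian integral and evaluates to a constant times $\exp\!\left[-\Vert x_1-x_2\Vert^2/(2\lambda^2)\right]$, i.e. the RBF covariance \eqref{rbfCov}, up to a harmless reparametrisation of the amplitude and length-scale (and boundary corrections that vanish as the region of centres grows). Since every $f\in\mathcal{F}_N$ is a linear image of a Gaussian vector, all its finite-dimensional marginals are Gaussian; convergence of the covariances therefore lifts to convergence of the finite-dimensional distributions, and the limiting process is a GP with covariance \eqref{rbfCov}.

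\paragraph{Direction (ii): approximating RBF-GP paths by finite models.}
For the reverse inclusion I would use the Mercer / Karhunen--Lo\`{e}ve expansion of the continuous, positive-definite kernel \eqref{rbfCov} on a compact domain: $k(x_1,x_2)=\sum_{i\ge1}\eta_i e_i(x_1)e_i(x_2)$ with $\eta_i>0$, so that a sample path has the representation $f=\sum_{i\ge1}\sqrt{\eta_i}\,z_i e_i$ with $z_i\sim\mathcal{N}(0,1)$ i.i.d., the partial sums converging to $f$ uniformly on the domain almost surely. Each eigenfunction $e_i$ of the squared-exponential kernel is a Gaussian times a Hermite polynomial; such functions --- indeed any continuous function on a compact set --- are uniform limits of finite linear combinations of translated Gaussians $\phi_\mu$, because the Gaussian bump has a nowhere-vanishing Fourier transform (a Wiener--Tauberian, or Stone--Weierstrass-type, density argument). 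Hence each partial sum of the Karhunen--Lo\`{e}ve series, and therefore $f$ itself, lies in the closure of $\bigcup_N\mathcal{F}_N$, completing the second inclusion.

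\paragraph{Main obstacle.}
The genuine difficulties here are topological rather than computational. One must first fix a precise meaning for \emph{sample space} and \emph{infinite union}: the clean statement is that the topological support of the RBF Gaussian measure (say on $C(E)$ with the uniform norm) equals the closure of $\bigcup_N\mathcal{F}_N$, and the crux of direction (ii) is then the density of finite translated-Gaussian combinations in $C(E)$ --- the one non-routine input. Secondly, upgrading pointwise convergence of the covariance kernels to convergence of the processes, and controlling the Karhunen--Lo\`{e}ve tail uniformly, needs a tightness / uniform-integrability estimate; routine, but not empty. These are presumably the reasons the authors quote the result from Rasmussen \& Williams and state it without proof.
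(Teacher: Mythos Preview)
The paper does not prove this lemma at all: immediately before the statement it says ``we report without proof the following lemma (see e.g.\ Rasmussen \& Williams, Ch~4.2.1)'', and nothing further is given. So there is no in-paper argument to compare against.

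That said, your Direction~(i) is exactly the computation Rasmussen \& Williams carry out in the cited section: the covariance of a random-feature model with Gaussian bumps and i.i.d.\ Gaussian weights is a Riemann sum that, in the limit of infinitely many centres, evaluates to the squared-exponential kernel via the completing-the-square identity you wrote down. In this sense your proposal subsumes what the paper is implicitly relying on. Your Direction~(ii) and your ``Main obstacle'' paragraph go well beyond the cited reference; you are right that the phrase ``sample space is the infinite union'' is informal and only becomes precise once one fixes a topology and reads \emph{union} as \emph{closure of the union}, and that the honest content is the density of translated Gaussians together with a Karhunen--Lo\`{e}ve approximation. None of this is in the paper or in Rasmussen \& Williams~4.2.1, which treat the result as a heuristic characterisation rather than a theorem, so your added rigour is a genuine improvement over both.
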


We refer to the basis functions entering in the constructive definition of GPs given in Lemma \ref{denseLemma} as {\it kernel functions}. Two immediate consequences of the previous Lemma are important for us:\begin{itemize}
\item{A GP with RBF covariance defines an {\it isotropic} distribution in its sample space (this follows immediately from the i.i.d.\ definition of the weights in Lemma \ref{denseLemma})};
\item{The sample space of a GP with RBF covariance is a dense subset of the space of all continuous functions (see also \cite{bortolussi:smoothed16} and references therein).}
\end{itemize}

GPs therefore provide us with a convenient way of extending the procedure described in Lemma \ref{finDimGrad} to the infinite dimensional setting.
In particular, Lemma \ref{denseLemma} implies that any scheduler function $f \in \mathcal{F}$ that is a sample from a GP (with RBF covariance) can be approximated to arbitrary accuracy in terms of basis functions as follows:
\begin{equation}
f(\vec{X}, t) = \sum_{j=1}^N w_j \exp\left[-0.5 ([\vec{X},t]^{\top}-\mu_j)^{\top} \Lambda^{-1} ([\vec{X},t]^{\top}-\mu_j) \right]
\end{equation}
where $\mu_j \in \mathbb{R}^n \times [0,T]$ is the centre of a Gaussian kernel function, $\Lambda$ is a diagonal matrix that contains $n+1$ squared length-scale parameters of the kernel functions, and $n$ is the dimensionality of the state-space.
This formulation allows describing functions (aka points in an infinitely dimensional Hilbert space) as points in the finite vector space spanned by the weights $\mathbf{w}$.
Note that the proposed basis function representation implies relaxation of the population variables to the continuous domain, though in  practice we are only interested in evaluating $f(\vec{X}, t)$ for integer-valued $\vec{X}$. 

The advantage of the kernel representation is that we do not need to account for all states $\vec{X} \in S$, but only for $N$ Gaussian kernels with centres $\mu_j$ for $1 \leq j \leq N$.
Therefore, the value of the scheduler at a particular state $\vec{X}$ will be determined as a linear combination of the kernel functions, with proximal kernels contributing more due to the exponential decay of the kernel functions.
This method offers a compact representation of the scheduler, and essentially does not suffer from state-space explosion, as we treat states as continuous. Moreover, we do not lose accuracy, as every function on $S$ can be extended to a continuous function on $E$ by interpolation.
On the practical side, we consider that the kernel functions are spread evenly across the joint space (state space \& time), and the length-scale for each dimension is considered to be equal to the distance of two successive kernels.\footnote{Kernel functions typically also have an amplitude parameter, which we consider to be equal to 1.}

\subsection{A Stochastic Gradient Ascent Algorithm}

Given a scheduler $\sigma$, we first evaluate the reachability probability via statistical model checking. 
We then perturb the corresponding functions $f_{\alpha}$ by adding a draw from a zero-mean GP with marginal variance scaled by $\epsilon\ll 1$, and evaluate again by statistical model checking the probability of the perturbed scheduler. If this is increased, we take a step in the perturbed direction, otherwise we take a step in the opposite direction. Notice that this procedure can be repeated for multiple independent perturbation functions to obtain a more robust estimate. 
The whole procedure is described in Algorithm \ref{alg:estimateGradient}, which produces an estimate for the gradient of the functional $Q$ at a vector $\mathbf{f}$ of functions $f_{\alpha}$ by considering the average of $k$ random directions.
\begin{algorithm}[ht!]
\caption{Estimate the functional gradient of $Q[\mathbf{f}]$}
\label{alg:estimateGradient}
\begin{algorithmic}
\REQUIRE Vector $\mathbf{f}$ of functions $f_{\alpha}$, scaling factor $\epsilon$, batch size $k$
\ENSURE An estimate of the functional derivative (gradient) $\nabla Q \equiv \frac{\delta Q}{\delta \mathbf{f}}$
	\STATE Set gradient $\nabla Q = 0$
	\STATE Evaluate $Q[\mathbf{f}]$ via statistical model checking
	\FOR{$i=1$ to $k$}
		\STATE Consider random direction $\mathbf{g}$ such that $\forall \alpha \in \mathcal{A}$, we have:
		\[ g_a \sim \mathcal{N}(0, 1) \]
		\STATE Evaluate $Q[\mathbf{f} + \epsilon \mathbf{g}]$
		\STATE Estimate the directional derivative:
		\[ \nabla_{\mathbf{g}} Q = \frac{Q[\mathbf{f} + \epsilon \mathbf{g}] - Q[\mathbf{f}]}{\epsilon} \]
		\IF{$\nabla_{\mathbf{g}}Q > 0$}
			\STATE $\nabla Q \leftarrow \nabla Q + \frac{1}{k} \mathbf{g}$
		\ELSE
			\STATE $\nabla Q \leftarrow \nabla Q - \frac{1}{k} \mathbf{g}$
		\ENDIF
	\ENDFOR
\end{algorithmic}
\end{algorithm}
We are now ready to state our main result:
\begin{theorem}
Algorithm \ref{alg:estimateGradient} gives an unbiased estimate of the functional gradient of the functional $Q[f_\alpha]$.
\end{theorem}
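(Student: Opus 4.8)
The plan is to combine Lemma~\ref{finDimGrad} with the finite-dimensional representation of scheduler functions in terms of basis functions. The key observation is that although $Q$ is a functional on the infinite-dimensional space $\mathcal{F}$, Algorithm~\ref{alg:estimateGradient} operates entirely in the finite-dimensional coordinate space of the weights $\mathbf{w}$ that parametrise a truncated GP sample, so the infinite-dimensional claim reduces to a finite-dimensional one via Lemma~\ref{denseLemma}. First I would make this reduction explicit: fix a batch of $N$ kernel functions, so that every scheduler considered by the algorithm is identified with its weight vector $\mathbf{w}\in\mathbb{R}^{N|\mathcal{A}|}$, and $Q$ restricted to this subspace is an ordinary smooth function $q(\mathbf{w})$ (smoothness following from the fact that the reachability probability of a time-inhomogeneous CTMC depends smoothly on the rates, which in turn depend smoothly, via the soft-max, on $\mathbf{w}$). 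The random direction $\mathbf{g}$ drawn in the algorithm, with i.i.d.\ $\mathcal{N}(0,1)$ coordinates, is exactly an isotropic zero-mean random vector in this coordinate space, and it corresponds to a draw from a zero-mean GP by the i.i.d.-weights construction in Lemma~\ref{denseLemma}.

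Next I would identify the quantity accumulated by the algorithm with the random vector $\hat{\mathbf{w}}$ of Lemma~\ref{finDimGrad}. At each iteration the algorithm forms the directional difference quotient $(Q[\mathbf{f}+\epsilon\mathbf{g}]-Q[\mathbf{f}])/\epsilon$, checks its sign, and adds $\pm\frac1k\mathbf{g}$ accordingly — i.e.\ it adds $\frac1k\hat{\mathbf{g}}$ where $\hat{\mathbf{g}}=\mathbf{g}$ if $q(\mathbf{w}+\epsilon\mathbf{g})-q(\mathbf{w})>0$ and $\hat{\mathbf{g}}=-\mathbf{g}$ otherwise (the sign of the difference quotient is the sign of the numerator since $\epsilon>0$). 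Averaging over $i=1,\dots,k$ independent draws and invoking linearity of expectation, the expected output is $E_p[\hat{\mathbf{g}}]$, which by Lemma~\ref{finDimGrad} (after dividing through by the irrelevant positive constant $\epsilon$, and noting the statement there is up to an $O(\epsilon^2)$ term and a positive proportionality constant) is proportional to $\nabla q(\mathbf{w})$. Finally I would note that $\nabla q(\mathbf{w})$ is precisely the coordinate representation of the functional derivative $\frac{\delta Q}{\delta f}$ restricted to the chosen basis, as remarked after the definition of functional derivative: pairing $\frac{\delta Q}{\delta f}$ against a basis direction $g$ recovers the corresponding partial derivative. Hence the algorithm's output is, in expectation, a positive multiple of (the basis-projected) functional gradient, which is the claimed unbiasedness.

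The main obstacle — and the point where the argument is genuinely delicate rather than routine — is the interchange between the infinite-dimensional and finite-dimensional pictures and the precise sense of ``unbiased''. Strictly, Lemma~\ref{finDimGrad} only yields proportionality to the gradient up to $O(\epsilon^2)$, so ``unbiased'' must be read as ``unbiased up to the directional term that vanishes as $\epsilon\to0$ and up to an overall positive scaling'', which is harmless for a gradient-ascent direction but should be stated carefully. A second subtlety is that the statistical-model-checking evaluation of $Q[\mathbf{f}]$ and $Q[\mathbf{f}+\epsilon\mathbf{g}]$ is itself a Monte Carlo estimate; if one wants the end-to-end estimate to remain unbiased one must argue that the sign test $\nabla_{\mathbf{g}}Q>0$ is, with high probability for suitably many simulation runs, the same as the exact sign, or alternatively absorb this into the statement. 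I would address the first point by phrasing the theorem's conclusion exactly as in Lemma~\ref{finDimGrad} (proportional to the gradient, modulo $O(\epsilon^2)$), and flag the second as an assumption that the inner statistical model checking is accurate enough to resolve the sign correctly, deferring a full finite-sample analysis.
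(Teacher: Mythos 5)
Your proposal is correct and follows essentially the same route as the paper: the paper's own proof is a single sentence asserting that unbiasedness follows because the statistical model checking estimate and the sign-flipped perturbation estimate of Lemma~\ref{finDimGrad} are unbiased and independent, which is exactly the argument you spell out via the reduction to the finite-dimensional weight space of the kernel representation. Your version is in fact more careful than the paper's, since you explicitly flag the two points the paper glosses over --- that Lemma~\ref{finDimGrad} only gives proportionality to the gradient up to an $O(\epsilon^2)$ term, and that the Monte Carlo noise entering the sign test $\nabla_{\mathbf{g}}Q>0$ does not automatically preserve unbiasedness and must be treated as an accuracy assumption on the inner statistical model checking.
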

\begin{proof} Since both the statistical model checking estimation and the gradient estimation are unbiased and independent of each other, this follows.
\end{proof}

\begin{algorithm}[ht!]
\caption{Stochastic gradient ascent for $Q[\mathbf{f}]$}
\label{alg:gradientAscent}
\begin{algorithmic}
\REQUIRE Initial function vector $\mathbf{f}_0$, learning rate $\gamma_0$, $n_{\max}$ iterations
\ENSURE A function vector $\mathbf{f}$ that approximates a local optimum of $Q$
	\FOR {$n \leftarrow 1$ \TO $n_{\max}$}
		\STATE Estimate the functional gradient $\nabla Q$ by using Algorithm \ref{alg:estimateGradient}
		\STATE Update: $\mathbf{f}_n \leftarrow \mathbf{f}_{n-1} + \gamma_{n-1} \nabla Q$
	\ENDFOR
\end{algorithmic}
\end{algorithm}

Therefore, we can use this stochastic estimate of the functional gradient to devise a stochastic gradient ascent algorithm which directly solves the variational problem in equation \eqref{varFormulation}.
This is summarised in Algorithm \ref{alg:gradientAscent}, which requires as input an initial vector of functions $\mathbf{f}_0$, and a learning rate $\gamma_0$.
The effects of the learning rate on the convergence properties of the method have been extensively studied in the literature.
In particular, for a decreasing learning rate convergence is guaranteed in the strictly convex scenario, if the following conditions are satisfied: $\sum_{n} \gamma_n = \infty$ and $\sum_{n} \gamma^2_n < \infty$ \cite{Murata1999,Bottou2010}, suggesting a $\Theta(n^{-1})$ decrease for the learning rate.
In non-convex problems, such as the ones considered in this work, the $\Theta(n^{-1})$ decrease is generally too aggressive, leading to vulnerability to local optima.
Following the recommendations of \cite{Bottou2012}, we adopt a more conservative strategy:
\begin{equation}
\gamma_n = \gamma_0\; n^{-1/2}
\end{equation}
where $\gamma_0$ is an initial value for the learning rate, which is problem dependent.

\section{Example}
\label{sec:example}

We demonstrate the stochastic gradient ascent algorithm on a simple epidemiology that features no permanent recovery, also known as the SIS model.
The system is modelled as a PCTMDP, in which the state is described by two variables denoting the population of susceptible ($X_S$) and infected individuals ($X_I$).
We assume that no immunity to the infection is gained upon recovery.
The objective is to monitor how infection progresses over time, given that there is a non-deterministic choice  at each step among actions in $\mathcal{A} = \{\mathit{no\  treatment}, \mathit{treatment}\}$, indicating whether an external action is taken to deal with the infection.

This non-deterministic choice will affect the dynamics of the system, which are represented by a list of transitions together with their rate functions, in the biochemical notation style (see e.g. \cite{Gillespie1977}):
\begin{description}[labelwidth=150pt]
\item[infection (*):] $S+I \xrightarrow{k_i} I+I$,  with rate function  $k_i\, X_S\, X_I$;
\item[slow recovery (\emph{no treatment}):]  $I\xrightarrow{k_r} S$,  with rate function  $k_r\, X_I$;
\item[self-infection (\emph{no treatment}):] $S \xrightarrow{k_i} I$,  with rate function  $k_i\, X_S / 2$;
\item[fast recovery (\emph{treatment}):]     $I\xrightarrow{k_r} S$,  with rate function  $\alpha\, k_r\, X_I$;
\item[death (\emph{treatment}):]     $I\xrightarrow{k_r} \emptyset$,  with rate function  $k_d\, X_I$;
\item[death (\emph{treatment}):]     $S\xrightarrow{k_r} \emptyset$,  with rate function  $k_d\, X_S$;
\end{description}
Among the transitions above, only \emph{infection} has the same rate regardless of any non-deterministic choice.
If the \emph{no treatment} action is chosen, infected individuals recover slowly as prescribed by the \emph{slow recovery} transition, while there is a small chance of self-infection.
If treatment is applied, the recovery rate is increased by a factor $\alpha > 1$, and the chance of spontaneous infection is eliminated.
We assume however that the treatment is associated with some very negative side-effects that result in a small probability of death, either for healthy of infected individuals.

In this example, we seek to construct a scheduler that maximises the probability of having no deaths and no infected individuals during the time interval $[t_1,t_2]$, i.e. 
maximising the safety property 
\begin{equation}
	 \square_{[t_1,t_2]} G \qquad\qquad G=\{S = N\}
\label{eq:extinction}
\end{equation}
%
The application of treatment contributes in accelerating the extinction of the infected population, but it also introduces a possibility of death.
Therefore a policy of constantly applying treatment cannot be optimal with respect to the satisfiability of the property considered.
Moreover, maximising the satisfaction probability requires a time-dependent scheduler, as the treatment application has to be appropriately timed so that it has effect in the time-interval $[t_1, t_2]$.

In the experiments that follow, we illustrate how the stochastic gradient ascent algorithm converges to solutions that maximise this probability.
We consider a system with total population $N = 100$, and initial populations $X_{S_0} = 90$ and $X_{I_0} = 10$.
The rate constants are $k_i=0.0012$ for infection, $k_r=0.1$ for recovery, $k_d=0.0002$ for the death event, while the increase in the recovery rate due to treatment is fixed to $\alpha=10$.
The time bounds for the safety property considered are $t_1 = 50$ and $t_2 = 60$.
Regarding the stochastic gradient ascent parameters, the learning rate at the $n$-th step is $\gamma_n = \gamma_0 / \sqrt{n}$, where $\gamma_0 = 5$.
For the numerical estimation of the directional derivatives, we consider $\epsilon=0.1$ and the batch size for the gradient estimation was fixed to $k=5$.
For each estimation of the $Q$ function, we have used $1000$ simulation runs.
In all cases, the algorithm was run for $100$ iterations, meaning that a total of $600000$ simulation runs were used for each experiment.

We first present an example that illustrates the importance of time in the satisfaction of the time-bounded property in \eqref{eq:extinction}.
Figure \ref{fig:resulting_scheduler} reports a scheduler which is given as a solution by the stochastic gradient ascent approach.
The scheduler is presented as a multivariate function that takes values in $[0, 1]$, indicating the probability of selecting the \emph{no treatment} action for different values of state and time.
In particular, we have a series of surface plots, each of which summarises the probability of \emph{no treatment} as function of the 2-dimensional state-space for a different time-point.
The white colour denotes that \emph{no treatment} is selected with probability $1$, while the black colour implies that \emph{treatment} is used instead.
We can see that \emph{treatment} is only preferable for a particular time window and for certain parts of the state-space, that is $X_{S} > 80$ and $X_{I} < 20$. This makes sense, as the probability of achieving full recovery from a state with more than 20 infected is too small to justify the risks connected with treatment. 
More specifically, \emph{treatment} is selected with high probability for $t \in [33.75, 52.5]$, which precedes with a very small overlap the time interval if interest, which is $[50, 60]$.
Intuitively, to maximise the probability that all of the population is recovered over the course of a particular interval, the \emph{treatment} action should be engaged just before.
In a different case, there is an increased risk of death, as a consequence of the negative effects of prolonged treatment.

\begin{figure}[ht]
\centering
\subfigure {
	\includegraphics[width=0.3\textwidth]{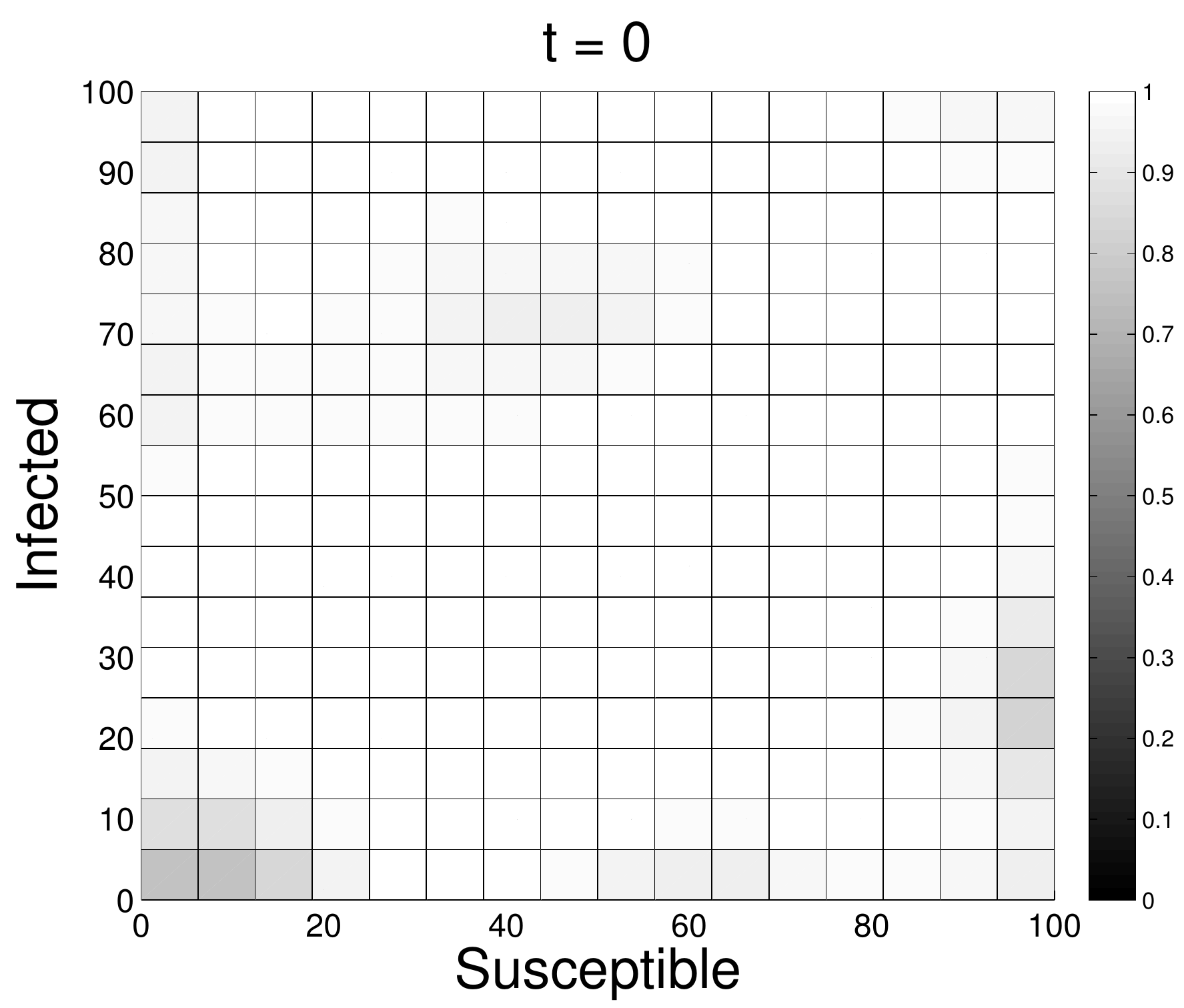}
}
\subfigure {
	\includegraphics[width=0.3\textwidth]{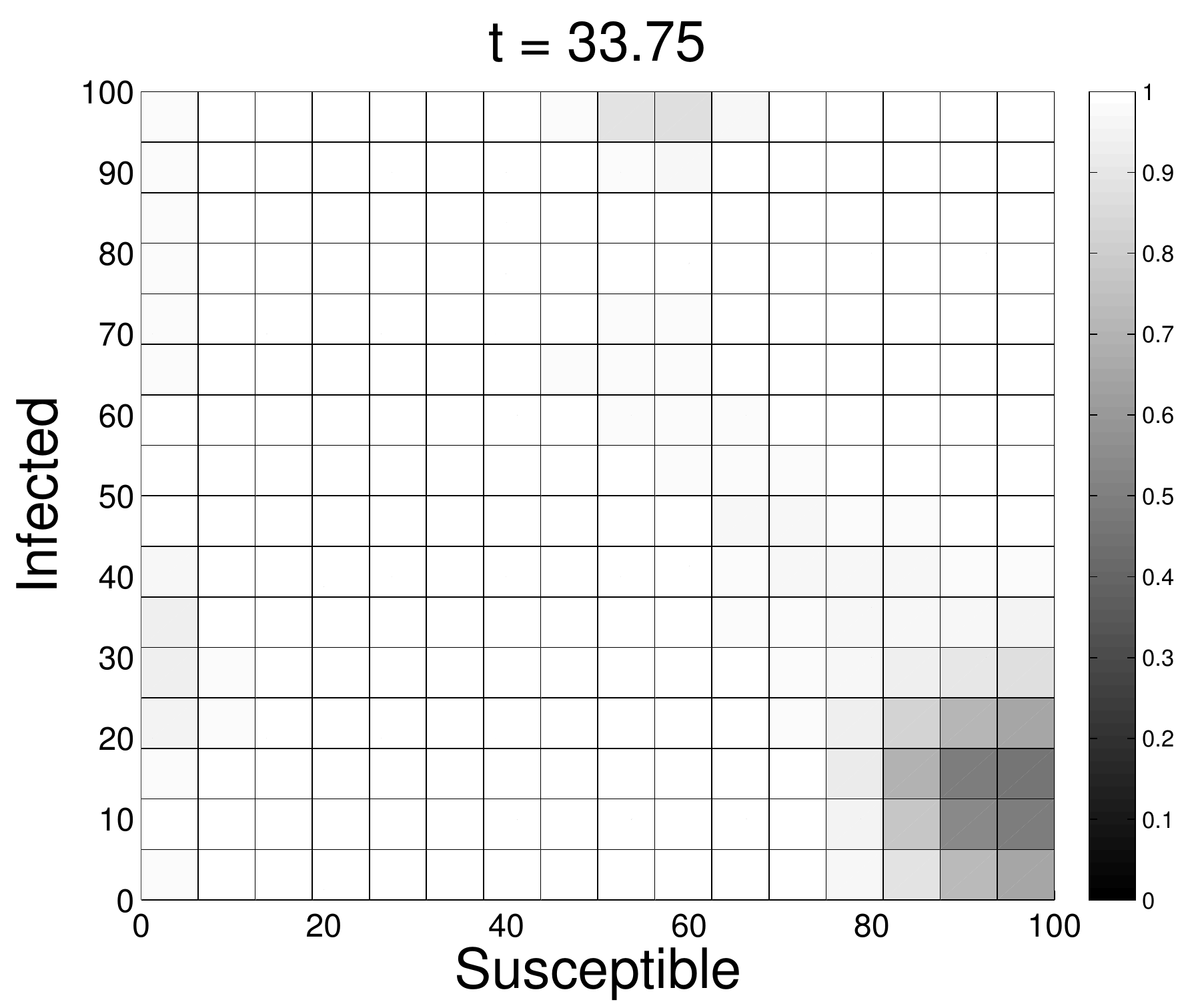}
}
\subfigure {
	\includegraphics[width=0.3\textwidth]{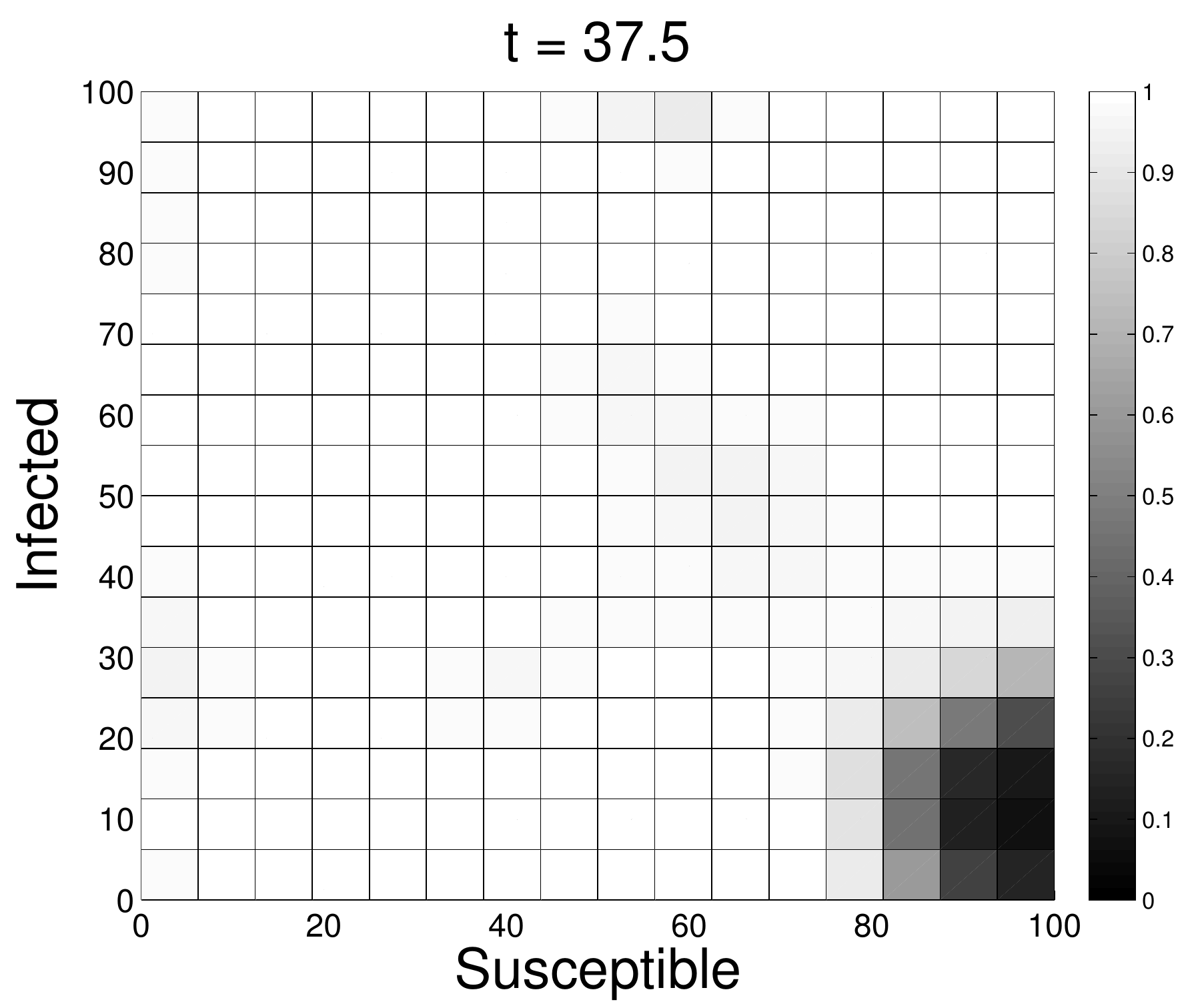}
}
\subfigure {
	\includegraphics[width=0.3\textwidth]{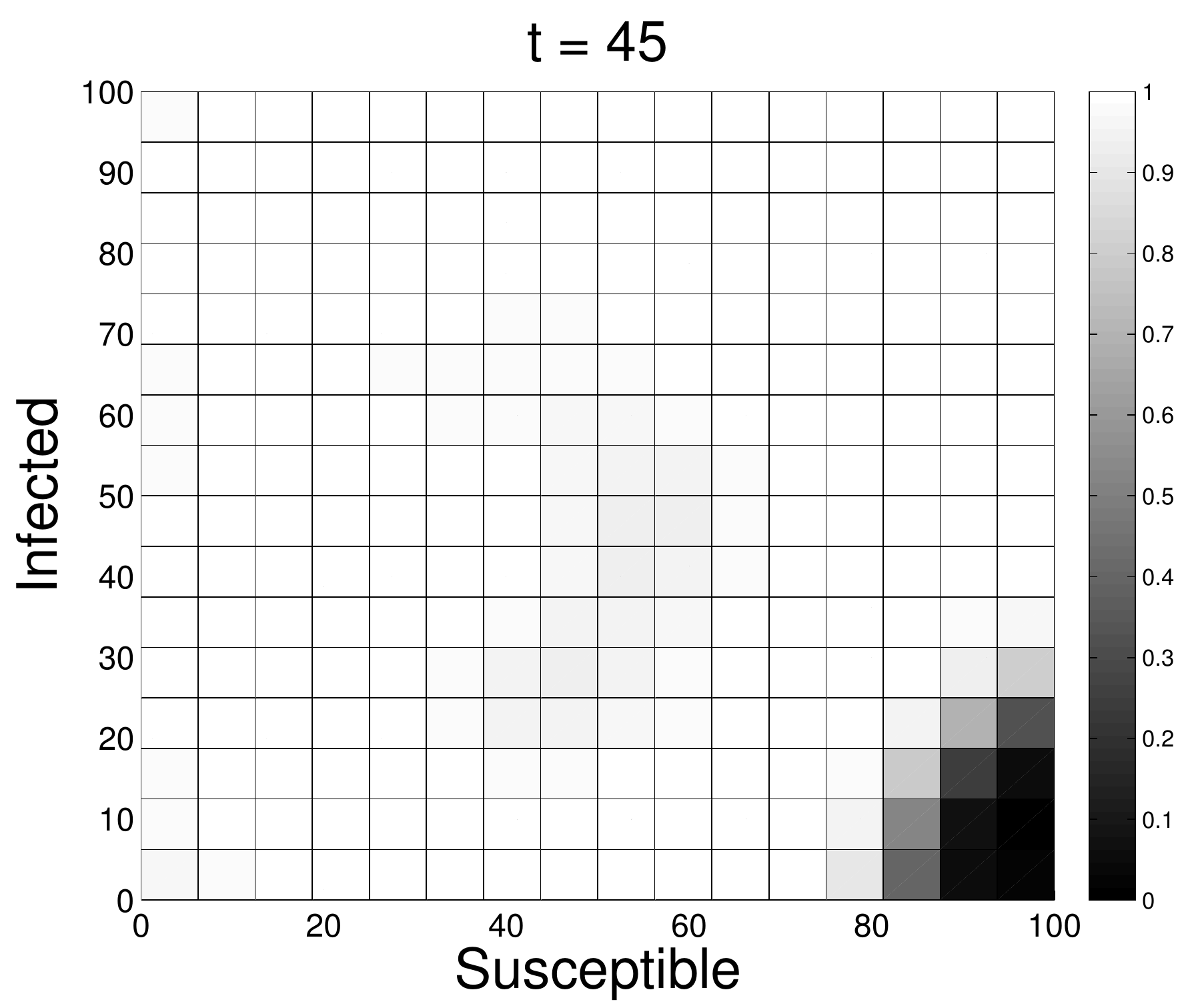}
}
\subfigure {
	\includegraphics[width=0.3\textwidth]{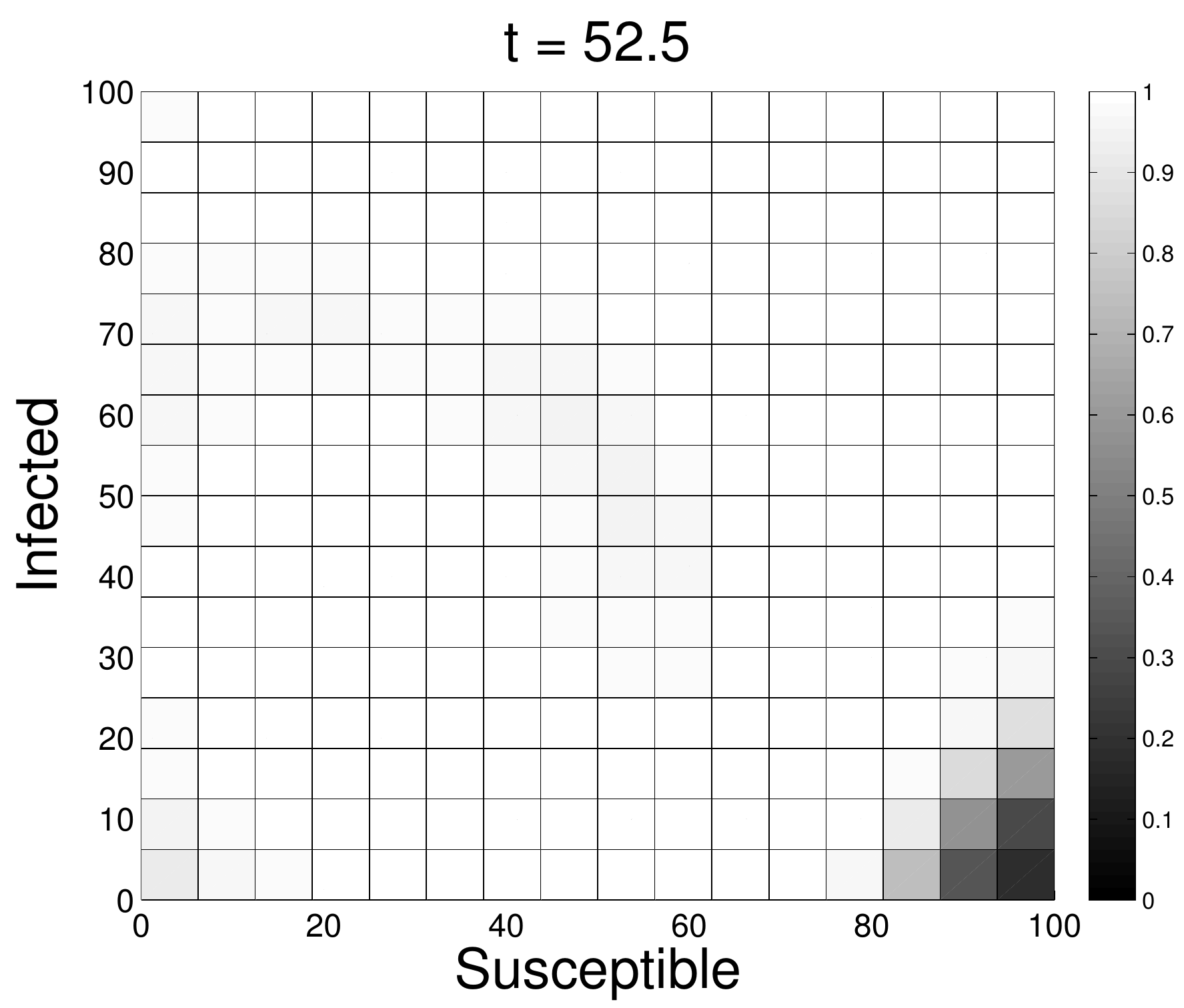}
}
\subfigure {
	\includegraphics[width=0.3\textwidth]{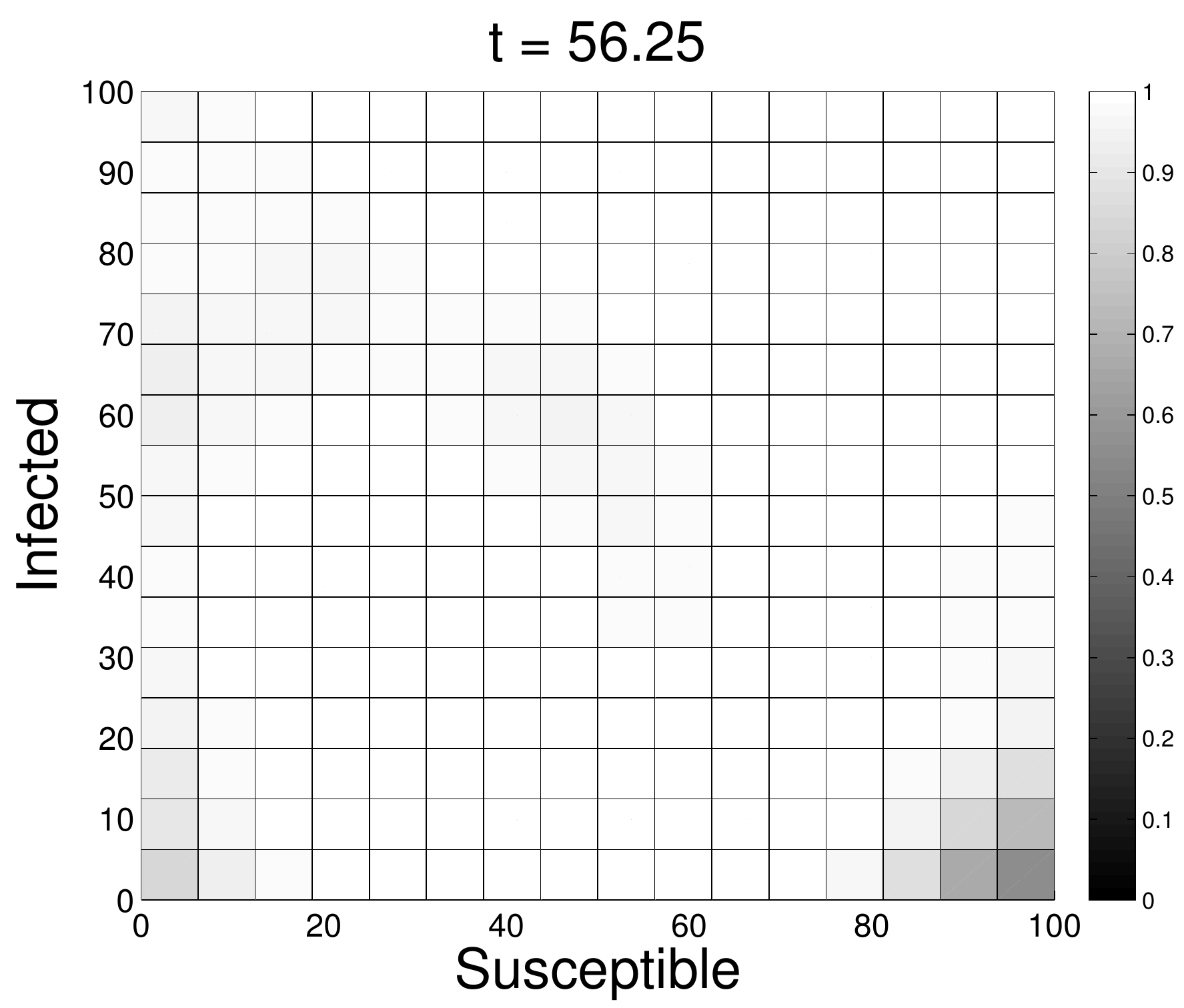}
}
\caption{Example of scheduler that (locally) maximises the probability of $\always{t_1}{t_2} S = N$. The white area indicates high probability of choosing the \emph{no treatment} action; the dark area indicates high probability of choosing \emph{treatment}.}
\label{fig:resulting_scheduler}
\end{figure}

We next investigate how the algorithm responds to different initial schedulers.
In Figure \ref{fig:qvalues}, we monitor how the value of the functional $Q$ as function of the scheduler evolves during the course of the algorithm, starting from different initial solutions.
More specifically, Figure \ref{fig:qvalues_1} depicts the evolution of $Q$ values starting from a scheduler where \emph{no treatment} is globally selected as an action.
The initial satisfaction probability is very small, but after a number of iterations it converges to values above $0.6$.
Figure \ref{fig:qvalues_2} summarises the results where the initial solution selects \emph{treatment} everywhere; apparently this initial solution has been closer to the local optimum and the convergence rate had been significantly faster in this case.
Convergence is even faster in Figure \ref{fig:qvalues_3}, where a uniform initial solution was used; that is that each of the two possible actions has equal probability $\forall s \in S$ and $\forall t \in T$.
Finally, in Figure \ref{fig:qvalues_4} we report the $Q$ values for a run starting from a randomly initialised scheduler.
In the last two instances, the starting point has had $Q$ values at around $0.4$, which is closer to the maximum; therefore the algorithm naturally required fewer iterations to converge to a good solution.
Although the convergence rate is apparently dependent on the initial solution, the experiments considered resulted in solutions of similar value, which obtain satisfaction probabilities at around $0.65$.
It is important to note however that there is no guarantee that the algorithm will converge to the global maximum, since the problem considered in not convex in the general case.

\begin{figure}[ht]
\centering
\subfigure [\emph{no treatment} only initial scheduler \label{fig:qvalues_1}] {
	\includegraphics[width=0.44\textwidth]{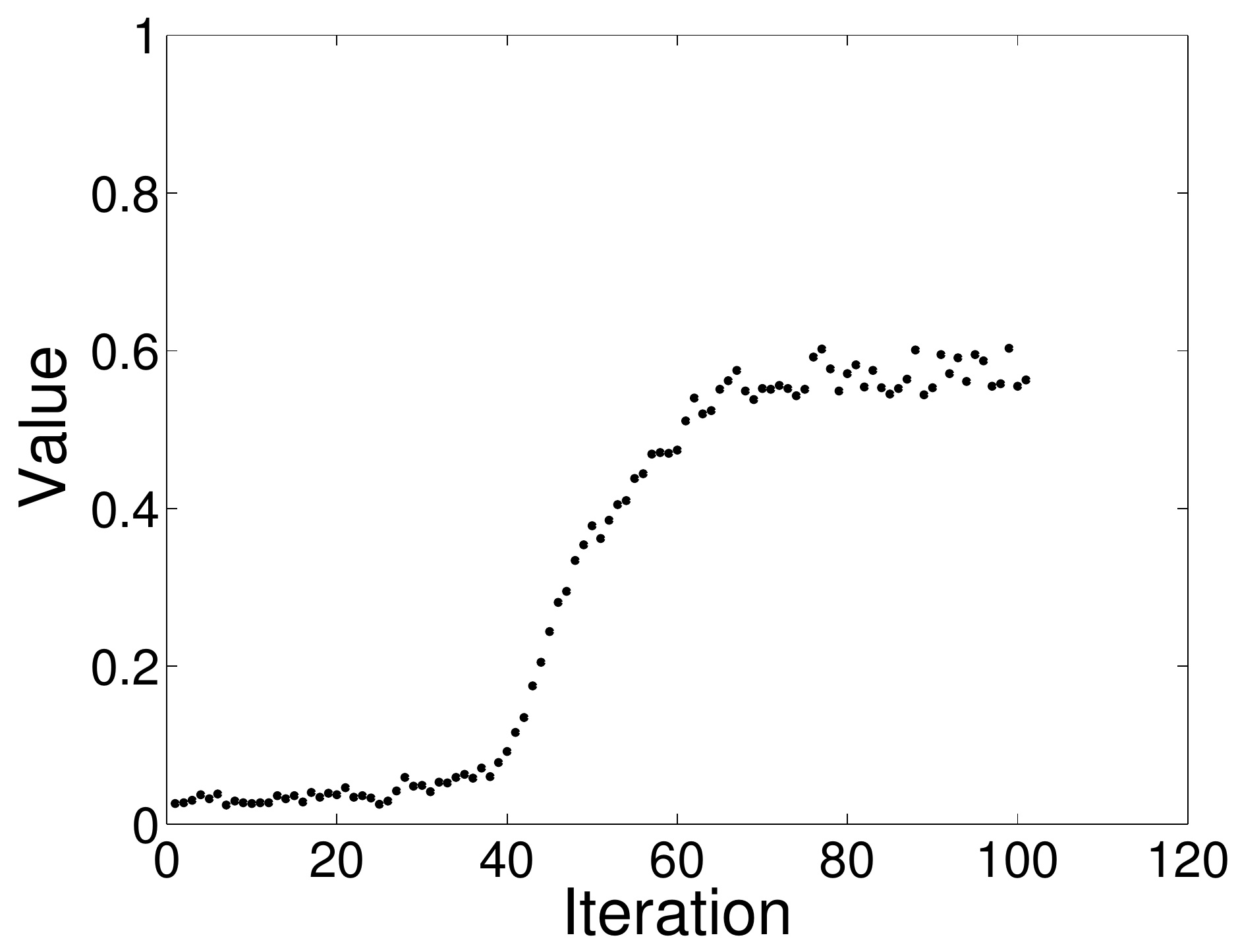}
}
\subfigure [\emph{treatment} only initial scheduler \label{fig:qvalues_2}] {
	\includegraphics[width=0.44\textwidth]{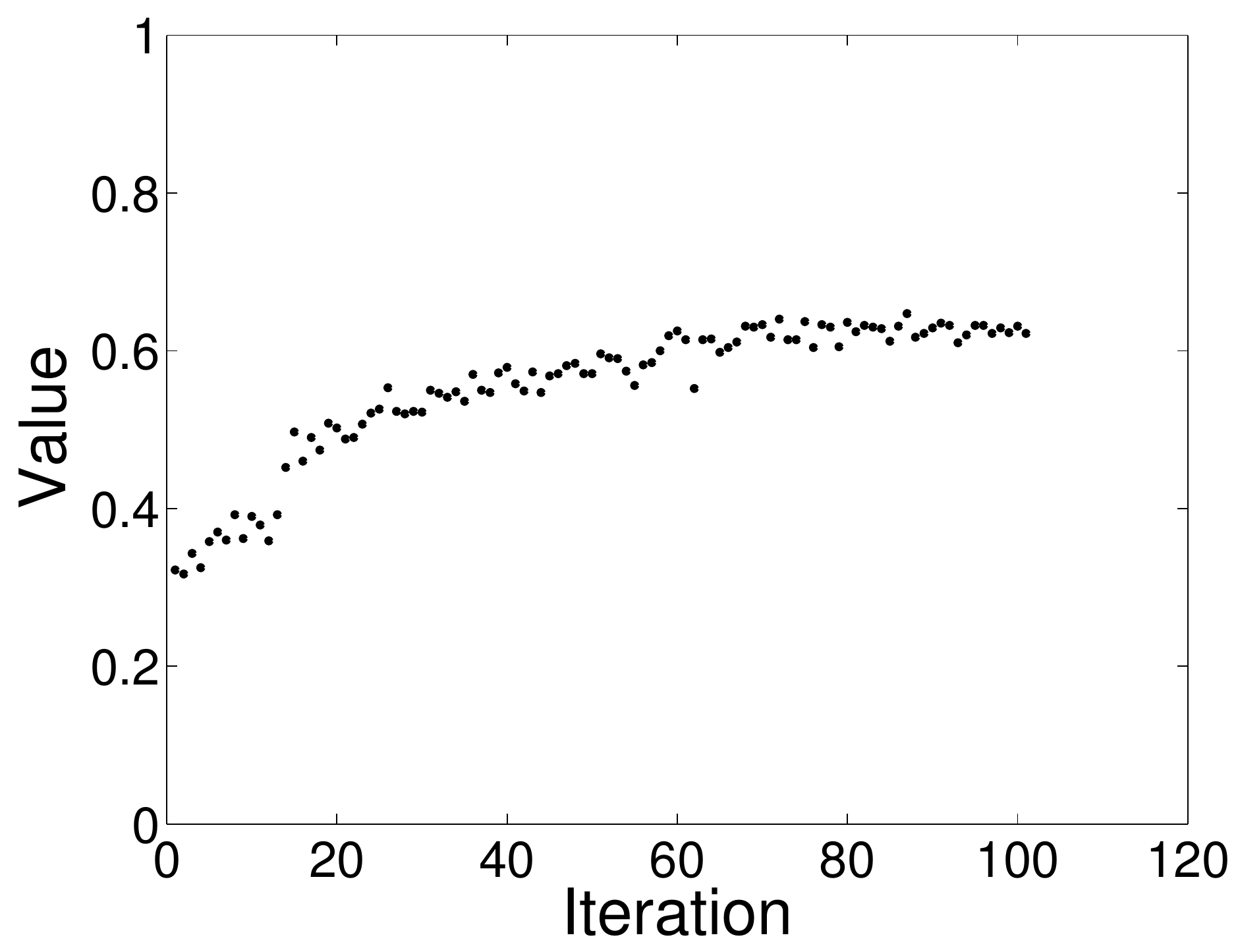}
}
\subfigure [Uniform initial scheduler \label{fig:qvalues_3}] {
	\includegraphics[width=0.44\textwidth]{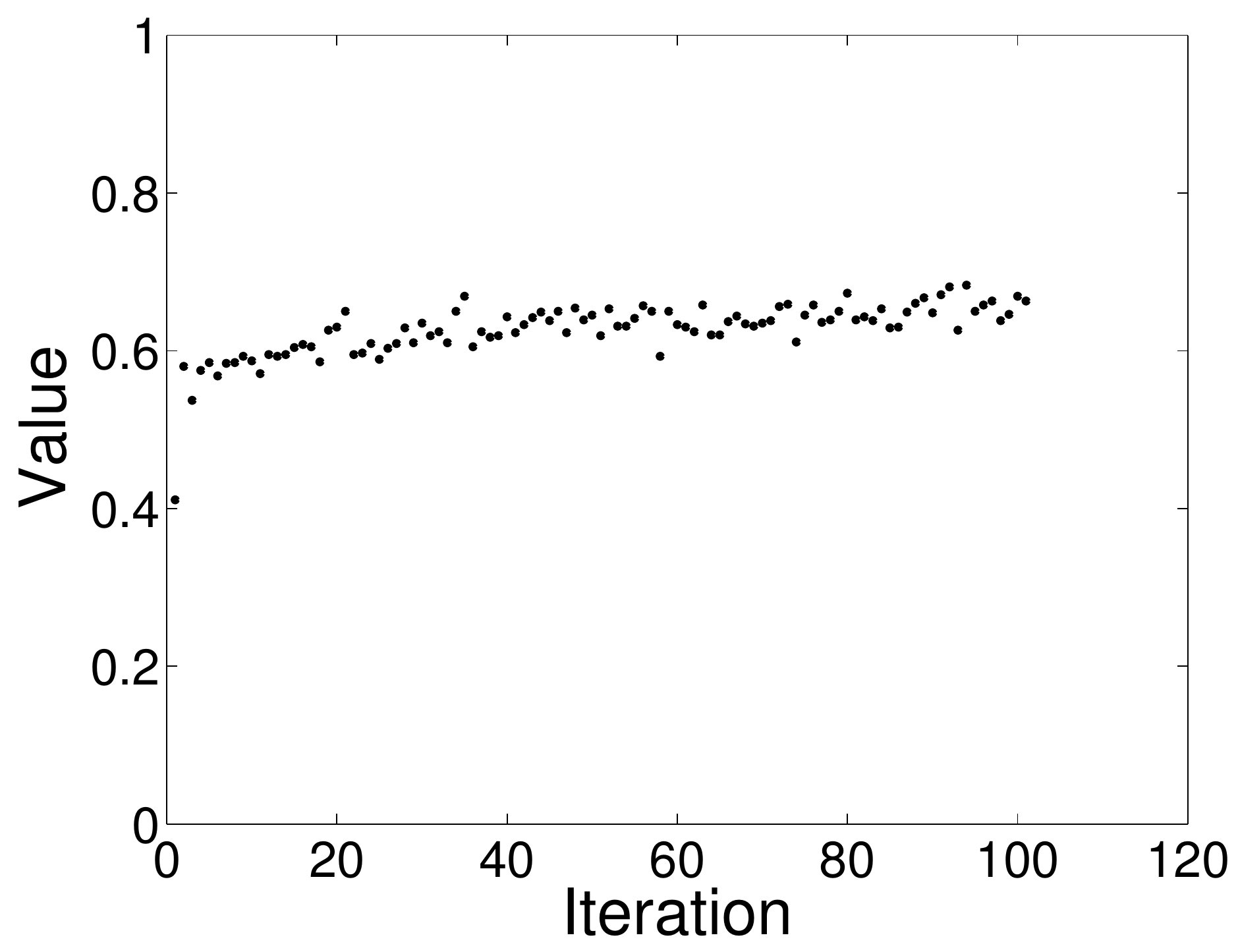}
}
\subfigure [Random initial scheduler \label{fig:qvalues_4}] {
	\includegraphics[width=0.44\textwidth]{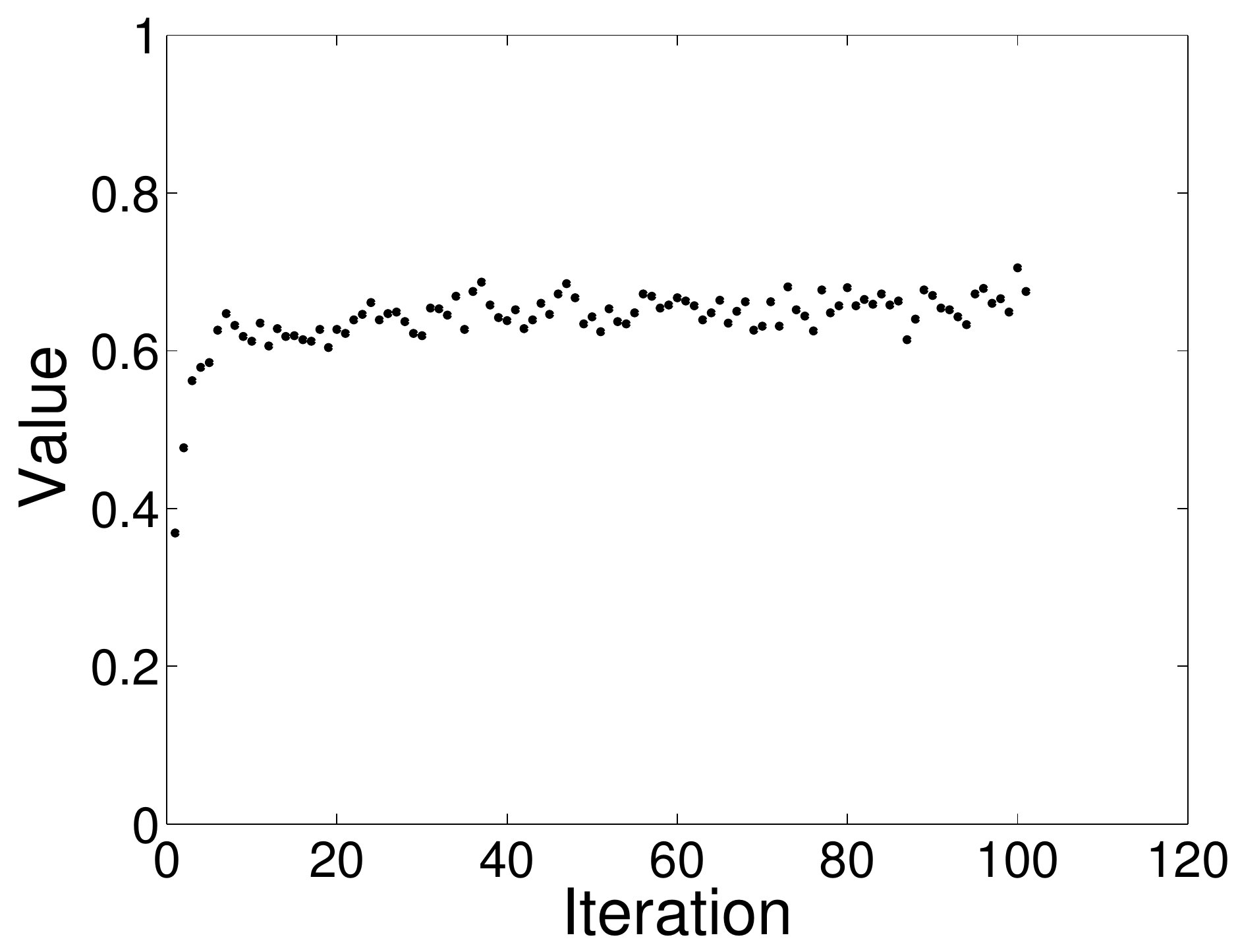}
}
\caption{Stochastic gradient ascent starting from different initial schedulers}
\label{fig:qvalues}
\end{figure}

\section{Conclusions}
\label{sec:conclusions}
 Continuous time Markov Decision processes play an important role in many applications, yet they are relatively understudied in the formal methods literature. Part of the problem resides in the difficulty to provide effective characterisations of time-varying schedulers. Recent methodologies \cite{Yuliya2015} have focussed on iterative algorithms based on uniformisation over an increasingly fine time discretisation. While such methods have the ability to compute exactly (up to numerical precision) the objective function (reachability probability), their scalability to large systems is significantly hampered by the state-space explosion problem. Furthermore, such approaches rely on the availability of a mathematical description of the systems, and are therefore not applicable to control black-box systems where a reliable model is not available.

 Our approach is suitable instead when the model of the system we want to control is not available a-priori. Our algorithm relies on using GPs, a probability distribution over the space of functions which universally approximates continuous functions. 
 
 A potentially significant limitation of our approach is its vulnerability to locally optimal choices. This is a common problem in optimisation, where global convergence in the non-convex case is well known to be hard. Theoretically, this means that our approach can only provide a lower-bound on the reachability probability; nevertheless, this can still be a very valuable result in practical scenarios. Empirically, we observed that the algorithm had excellent performance in a challenging test set; its computational efficiency also means that practical strategies to avoid local optima, such as multiple restarts, can be feasibly employed.

\paragraph{Acknowledgements.}
L.B.\ acknowledges partial support from the EU-FET project QUANTICOL (nr. 600708) and by FRA-UniTS.
G.S.\ and D.M.\ acknowledge support from the European Reasearch Council under grant MLCS306999.
T.B.\ is supported by the Czech Science Foundation, grant No.~\mbox{15-17564S}.
E.B.\ acknowledges the partial support of the Austrian National 
Research Network  S 11405-N23 (RiSE/SHiNE) of the Austrian Science Fund (FWF),
the ICT COST Action IC1402 Runtime Verification beyond Monitoring (ARVI) and the IKT der Zukunft of Austrian FFG project HARMONIA (nr. 845631).

\bibliographystyle{abbrv}
\bibliography{biblio}

\appendix
\section{Appendix}
\subsection{Proof of Proposition~\ref{prop:Neuhausser}}
In general, a {\em history-dependent randomized (HR)} scheduler $\pi$ is a (measurable) function which takes a path (a history) $h=s_0 t_0 s_1 t_1 \cdots s_n$ and returns a probability distribution on actions of $\mathcal{A}$. We write $\pi(h, a)$ to denote the probability that $a$ is taken after the history $h$. Our schedulers, as defined in~Defintion~\ref{def:scheduler}, are called {\em total time-positional randomized (TTPR)} schedulers. If the scheduler always assigns the probability one to exactly one action, we say that it is {\em deterministic}, which gives us classes HD and TTPD of history-dependent deterministic and total time-positional deterministic schedulers.
In principle, it has been shown in~\cite{Neuhausser2010} that our restriction is without loss of generality. We include a sketch of the argument {\em just for completeness}.

The argument can be (roughly) summarized as follows: Let us add a counter to 
the state-space i.e., states are now of the form $(s,k)$ where $s$ is a state of the original CTMDP $\mathcal{M}$ and $k$ is the number of steps the process made from the beginning. The CTMDP $\mathcal{M}$ is simulated in the first component and the number of steps counted in the other one, up to the moment when a threshold $n+1$ is reached and from this moment on the counter stays at value $n+1$ forever. The new goal states are the pairs 
$(s,k)$ where $s$ is a goal state in $\mathcal{M}$ and $k\leq n$. 
This gives us a new CTMDP $\mathcal{M}_n$.
Note that every HR scheduler in $\mathcal{M}_n$ can be easily transformed into a HR scheduler in $\mathcal{M}$ by taking a projection on the first component.

Denote by $V^n((s,k),t)$ the probability of reaching a goal state in $\mathcal{M}_n$ from $(s,k)$ within the time interval $I-t=[\max(0,t_1-t),\max(0,t_2-t)]$ where $I=[t_1,t_2]$.
Values $V^n((s,k),t)$ in the CTMDP $\mathcal{M}_n$ can be computed using backward induction as follows: Clearly, $V^n((s,n+1),t)$ is $0$ for all $t$. 
Assume that we already have $V^n((s,k+1),t)$. Now it suffices to find $\pi^n$ so that the following is maximized:
\[
\sum_{a} \pi^n(t,(s,k),a)\sum_{s'}\int_{0}^{\infty} R(s,a,s')e^{-R(s,a,s')t'} V^n((s',k+1),t
+t')dt'
\]
(Intuitively, first $a$ is chosen with probability $\pi^n(t,(s,k),a)$, then time delay $t'$ is chosen from the exponential distribution together with the next state $(s',k+1)$, finally we proceed optimally from $(s',k+1)$ after time $t+t'$, which means that we reach a goal state with probability $V^n((s',k+1), t+t')$.)
Apparently, it is optimal to choose
\[
\pi^n(t,(s,k),a) \in \mathit{argmax}_{a} \sum_{s'}\int_{0}^{\infty} R(s,a,s')e^{-R(s,a,s')t'} V((s',k+1),t+t')dt'
\]
Now observe that for every $k$ and every $t$ we have $\lim_{n\rightarrow\infty} V^n(s,k,t)=V(s,t)$ where $V(s,t)=\sup_{\sigma\in\Sigma} \mathbb{P}^{\mathcal{M},s}_{\sigma}(\diamond_{I-t} G)$.

Now let $m$ be large enough so that the probability of making more than $m$ steps in at most $t_2$ time units is less than $\varepsilon$. 
It follows that the strategy $\pi^{2m}$, which is optimal in $\mathcal{M}^{2m}$, is $\varepsilon$-optimal in $\mathcal{M}$ (which means that it satisfies $\diamond_{I} G$ with probability $\varepsilon$-close to the maximum value).

Let $m'>2m$ be large enough so that for all $k\leq 2m$ and all $t\leq t_2$ we have that 
\begin{align*}
\mathit{argmax}_{a} & \sum_{s'}\int_{0}^{\infty} R(s,a,s')e^{-R(s,a,s')t'} V^{m'}(s',k+1,t+t')dt' = \\ & \mathit{argmax}_{a}
\sum_{s'} \int_{0}^{\infty} R(s,a,s')e^{-R(s,a,s')t'} V(s',t+t')dt'
\end{align*}
It follows that a strategy which always chooses an action from 
\[
\mathit{argmax}_{a}\sum_{s'} \int_{0}^{\infty} R(s,a,s')e^{-R(s,a,s')t'} V(s',t+t')dt'
\]
behaves similarly to $\pi^{m'}$ and hence is $\varepsilon$-optimal. As $\varepsilon>0$ was chosen arbitrarily and the above choice depends only on $s$ and $t$, we obtain the desired optimal TTPD scheduler.\qed

\end{document}